\begin{document}

\begin{frontmatter}



\title{Minimum Enclosing Circle of a Set of Static Points with Dynamic Weight from One Free Point  }

\author{Lei Qiu*, Yu Zhang, Li Zhang}
\address{Department of Electronic Engineering, Tsinghua University, Beijing, China}
\cortext[cor1]{Corresponding author. Email: eeqiulei@gmail.com}
\begin{abstract}
Given a set $S$ of $n$ static points and a free point $p$ in the Euclidean plane, we study a new variation of the minimum enclosing circle problem, in which a dynamic weight that equals to the reciprocal of the distance from the free point $p$ to the undetermined circle center is included. In this work, we prove the optimal solution of the new problem is unique and lies on the boundary of the farthest-point Voronoi diagram of $S$, once $p$ does not coincide with any vertex of the convex hull of $S$. We propose a tree structure constructed from the boundary of the farthest-point Voronoi diagram and use the hierarchical relationship between edges to locate the optimal solution. The plane could be divide into at most $3n-4$ non-overlapping regions. When $p$ lies in one of the regions, the optimal solution locates at one node or lies on the interior of one edge in the boundary of the farthest-point Voronoi diagram. Moreover, we apply the new variation to calculate the maximum displacement of one point $p$ under the condition that the displacements of points in $S$ are restricted in 2D rigid motion.
\end{abstract}

\begin{keyword}
Minimum enclosing circle \sep
Dynamic weight \sep
Farthest-point Voronoi diagram \sep
Center function \sep
Plane division \sep
Division tree


\end{keyword}
\end{frontmatter}

\newtheorem{defn}{Definition}[section]
\newtheorem{Theorem}{Theorem}[section]
\newtheorem{Lemma}{Lemma}[section]
\newtheorem{Observation}{Observation}[section]
\newtheorem{Corollary}{Corollary}[section]
\newtheorem{Fact}{Fact}[section]

\section{Introduction}
The minimum enclosing circle problem is also called the smallest-circle problem or the minimum covering circle problem, which computes the smallest circle that contains all the static points of a given set $S=\{x_1,x_2,...,x_n\}$ in the Euclidean plane. The problem was initially proposed by James J. Sylvester in 1857 \cite{sylvester1857question} and could be solved in linear time \cite{megiddo1983linear,Welzl1991}. Denote the minimum enclosing circle of $S$ by $MEC(S)$, the center of $MEC(S)$ by $\varepsilon(S)$, the radius of $MEC(S)$ by $r(S)$ and the $l_2$ distance between two points $x_i$ and $x_j$ in $\mathbb{R}^2$ by $\|x_i-x_j\|$. Then $\varepsilon(S)$ is the optimal solution of the following min-max problem,
\begin{equation*}
\quad \min_{x}\max_{i}\|x-x_i\|, 
\end{equation*}
and $r(S)$ is the maximum distance from $\varepsilon(S)$ to points in $S$. One variation of the minimum enclosing circle problem is to assign a weight $w_i$ for the distance from the undetermined circle to each static point $x_i$ in $S$ \cite{dearing2013dual},
\begin{equation*}
\quad \min_{x}\max_{i}w_i\|x-x_i\|.
\end{equation*}
The optimal solution of the weighted minimum enclosing circle problem is existing and also unique. The center lies in the convex hull of $S$ \cite{hearn1982efficient} and could also be calculated in linear time \cite{megiddo1983weighted}. Some other variations of the minimum enclosing problem also appeared in literature. Aggarwal et al. \cite{Aggarwal1989} and Chazelle et al. \cite{chazelle1996linear} extended the linear algorithm to cases in higher space $\mathbb{R}^d$ $(d>2)$. Banik et al. \cite{Banik2014} investigated the locus of $\varepsilon(S)$ when one point in $S$ moves along a line.

In this paper we study another variation of the minimum enclosing circle problem. Instead of assigning a fixed weight $w_i$ for the distance term, we assign a dynamic weight, which equals to the reciprocal of the distance from the undetermined center $x$ to another free point $p$ in the Euclidean plane. Denote the dynamic weight by $w(x)$, then we have $w(x)=1/\|x-p\|$, thus the new variation could be represented as,
\begin{equation*}
M(W_0):\quad \min_{x}\max_{i}w(x)\|x-x_i\|,
\end{equation*}
The free point $p$ used in the weight is called as the weight point. When $x\rightarrow p$, $w(x)\rightarrow\infty$. Thus $p$ is a singular point of the objective function in problem $M(W_0)$.

It is easy to get the optimal solution of problem $M(W_0)$ if there is only one point $x_1$ in $S$. The optimal solution is $x_1$ if the weight point does not coincide with $x_1$. In the following article, we will only study the cases in which all points in $S$ are distinct and there are no less than two points in $S$. For such cases, $\max_{i}\|x-x_i\|>0$ and problem $M(W_0)$ has the following equivalent representation,
\begin{equation}
M(W_1):\quad \max_{x}\frac{\|x-p\|}{\max_{i}\|x-x_i\|}.
\end{equation}
$p$ is not a singular point in problem $M(W_1)$, and it is not the optimal solution because $\|x-p\|=0$ when $x=p$.

In the following article we will investigate the relationship between the weight point $p$ and the optimal solution of problem $M(W_1)$ given the static point set $S$.

\section{Preliminaries}

\begin{figure}
  \centering
  \includegraphics[width=0.4\linewidth]{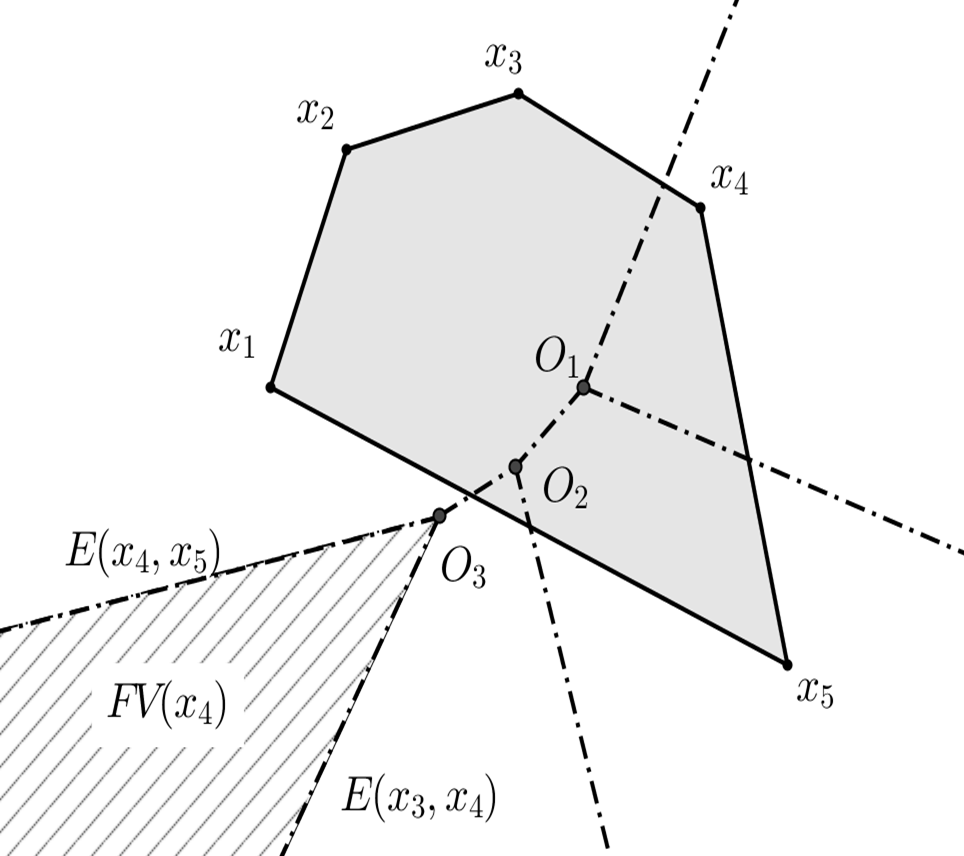}
  \caption{Denotations on the farthest point Voronoi Diagram.}
  \label{pic_fvd}
\end{figure}
Since the optimal solution of problem $M(W_1)$ is related with the farthest-point Voronoi diagram of $S$, we begin by introducing the relevant definitions and denotations. The farthest-point Voronoi diagram of $S=\{x_1,x_2,...,x_n\}$ is a subdivision of the plane into several cells, with property that a point $y$ lies in the cell corresponding to a site $x_i$ if and only if $\|y-x_i\|>\|y-x_j\|$ for each $x_j\in S$ with $j\neq i$ \cite{book2008CG}. The cell corresponding to site $x_i$ is denoted by $FV(x_i)$. The boundary of the farthest-point Voronoi diagram of $S$ is denoted by $FVB(S)$. The boundary between two cells $FV(x_i)$ and $FV(x_j)$ is an edge in $FVB(S)$, which is denoted by $E(x_i,x_j)$. The intersection of edges is a node in $FVB(S)$, which is denoted by $O$. We call the remaining part of $E(x_i,x_j)$ after removing two endpoints as the interior of $E(x_i,x_j)$, which is denoted by $E^\circ(x_i,x_j)$. The convex hull of $S$ is denoted by $CH(S)$. Fig. \ref{pic_fvd} shows a case in which all five static points in $S$ are the vertex of $CH(S)$. In all figures of this article, $FVB(S)$ is shown with the dot dash lines.

Some important properties of the farthest Voronoi diagram are summarized as below:

\begin{Fact}
\label{fact_mec_center}
$\varepsilon(S)$ is a point in $FVB(S)$.
\end{Fact}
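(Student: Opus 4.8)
The plan is to argue by contradiction using the first-order behaviour of the function $f(x) := \max_i \|x-x_i\|$, whose minimiser is $\varepsilon(S)$ with minimum value $r(S)$. For a point $x$ in the plane, let $F(x) := \{\, x_i \in S : \|x-x_i\| = f(x) \,\}$ be the set of sites realising the farthest distance from $x$. By the defining property of the farthest-point Voronoi diagram quoted above, $x$ lies in the open cell $FV(x_i)$ precisely when $F(x) = \{x_i\}$, and $x$ lies on $FVB(S)$ precisely when $|F(x)| \ge 2$. Hence it suffices to show $|F(\varepsilon(S))| \ge 2$.

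First I would suppose, for contradiction, that $\varepsilon(S)$ lies in the interior of some cell $FV(x_k)$, so that $F(\varepsilon(S)) = \{x_k\}$ and $\|\varepsilon(S)-x_j\| < r(S)$ for every $j \ne k$. Since there are at least two points in $S$, we have $\varepsilon(S) \ne x_k$ (otherwise $r(S)=0$ would force $S=\{x_k\}$), so the unit direction $u := (x_k - \varepsilon(S))/\|x_k-\varepsilon(S)\|$ is well defined. Consider the perturbed point $x(t) := \varepsilon(S) + t\,u$ for small $t>0$: the distance $\|x(t)-x_k\| = r(S) - t$ decreases linearly, while each $\|x(t)-x_j\|$ depends continuously on $t$ and equals $\|\varepsilon(S)-x_j\| < r(S)$ at $t=0$. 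Choosing $t$ small enough that all these quantities remain strictly below $r(S)$, we obtain $f(x(t)) < r(S)$, contradicting the minimality of $r(S) = f(\varepsilon(S))$. Therefore $|F(\varepsilon(S))| \ge 2$, i.e.\ $\varepsilon(S) \in FVB(S)$.

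The argument is short, and the only point requiring care is the continuity step: one must choose a single $t$ that keeps the finitely many distances to the non-farthest sites below $r(S)$ while also strictly decreasing the distance to $x_k$. This is immediate because $S$ is finite and each distance function is continuous, so I do not expect any real obstacle. As an alternative I could invoke the classical structural fact that $MEC(S)$ is determined by two or three sites of $S$ lying at distance exactly $r(S)$ from $\varepsilon(S)$ — these sites then all belong to $F(\varepsilon(S))$, giving $|F(\varepsilon(S))| \ge 2$ directly — but the perturbation argument above is self-contained and does not presuppose that structure theorem.
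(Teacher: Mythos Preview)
Your argument is correct and is the standard perturbation proof of this well-known property. Note, however, that the paper does not actually supply a proof: Fact~\ref{fact_mec_center} is listed in the Preliminaries as one of several known properties of the farthest-point Voronoi diagram and is taken for granted (with reference to~\cite{book2008CG}). So there is no ``paper's own proof'' to compare against; you have filled in a justification that the authors chose to omit as background. Both your perturbation argument and the alternative you mention (the structural fact that $MEC(S)$ is determined by two or three boundary sites) are standard routes to this result.
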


\begin{Fact}
\label{fact_outsidecell}
Static point $x_i$ does not belong to its corresponding cell $FV(x_i)$.
\end{Fact}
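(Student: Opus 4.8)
The plan is to argue directly from the defining inequality of a farthest-point Voronoi cell, together with the standing assumption (stated in the introduction) that $S$ consists of at least two pairwise distinct points. By definition, $y \in FV(x_i)$ holds precisely when $\|y-x_i\| > \|y-x_j\|$ for every $x_j \in S$ with $j \neq i$; therefore it suffices to exhibit a single index $j \neq i$ for which this inequality fails at the particular point $y = x_i$.

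Concretely, I would proceed as follows. First, fix $x_i$ and observe that, because $|S| = n \ge 2$ and the points of $S$ are distinct, there is at least one $x_j \in S$ with $j \neq i$ and $\|x_i - x_j\| > 0$. Second, evaluate the membership condition for $FV(x_i)$ at the point $y = x_i$: its left-hand side is $\|x_i - x_i\| = 0$, while for the index $j$ just selected the right-hand side is $\|x_i - x_j\| > 0$, so the required strict inequality $\|x_i-x_i\| > \|x_i-x_j\|$ does not hold. Hence $x_i$ violates the defining condition of $FV(x_i)$, i.e.\ $x_i \notin FV(x_i)$. For intuition one may add the complementary remark that $x_i$ in fact lies in the closure of $FV(x_k)$, where $x_k$ is a point of $S$ farthest from $x_i$; since $\|x_i - x_k\| > 0 = \|x_i - x_i\|$, this farthest site is never $x_i$ itself.

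I do not expect any genuine obstacle: the statement is an immediate consequence of the cell definition and the non-degeneracy hypothesis, and the argument is essentially a one-line computation. The only point that deserves to be made explicit is the convention that the cells are \emph{open}, i.e.\ defined by strict inequalities, which is exactly the convention adopted in the paper; under a closed-cell convention the same computation would only show that $x_i$ can at most lie on $FVB(S)$, whereas with the stated definition $x_i$ is excluded from $FV(x_i)$ outright.
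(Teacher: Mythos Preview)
Your argument is correct and is the natural one-line verification from the definition. Note that the paper itself does not give a proof of this statement: it lists it among the preliminary Facts about the farthest-point Voronoi diagram, all of which are stated without proof and attributed to standard references. There is therefore nothing to compare against beyond observing that your proof is the expected direct check.
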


\begin{Fact}
\label{fact_cell_covex}
$FV(x_i)$ is a convex set.
\end{Fact}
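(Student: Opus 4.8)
The plan is to realize $FV(x_i)$ as a finite intersection of half-planes and then invoke the elementary fact that an intersection of convex sets is convex. By the defining property of the farthest-point Voronoi diagram, a point $y$ lies in $FV(x_i)$ exactly when $\|y-x_i\|\ge\|y-x_j\|$ for every $x_j\in S$ with $j\neq i$ (the strict inequality used in the introduction describes the interior of the cell; the cell is the closed region obtained by allowing equality, and I comment on this convention below). Hence I would write $FV(x_i)=\bigcap_{j\neq i}H_{ij}$, where $H_{ij}=\{y\in\mathbb{R}^2:\|y-x_i\|\ge\|y-x_j\|\}$.

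First I would show that each $H_{ij}$ is a closed half-plane. Since $\|y-x_i\|$ and $\|y-x_j\|$ are both nonnegative, the inequality $\|y-x_i\|\ge\|y-x_j\|$ is equivalent to $\|y-x_i\|^2\ge\|y-x_j\|^2$. Expanding the squared norms, the quadratic term $\|y\|^2$ cancels from the two sides, leaving an affine inequality in the coordinates of $y$, namely $2(x_j-x_i)\cdot y\ge\|x_j\|^2-\|x_i\|^2$. Because the points of $S$ are distinct, $x_j-x_i\neq 0$, so this is a genuine closed half-plane: its boundary is the perpendicular bisector of the segment $x_ix_j$, and $H_{ij}$ is the side not containing $x_i$. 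A half-plane is convex.

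The conclusion then follows at once: $FV(x_i)$ is an intersection of finitely many convex sets and is therefore convex. I would emphasise that this argument is purely metric and uses nothing about the combinatorial structure of $FVB(S)$, so it stands on its own and could equally be placed before the discussion of edges and nodes.

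The only delicate point — and the closest thing to an obstacle — is the open-versus-closed convention for a cell. If one instead takes the open cell $\{y:\|y-x_i\|>\|y-x_j\|\ \forall\, j\neq i\}$, the same computation exhibits it as a finite intersection of open half-planes, which is still convex; and if one prefers the topological closure of that open region, the closure of a convex set is convex. All three readings give a convex set, so the statement is unaffected; in the write-up I would simply fix the convention explicitly at the start of the proof and proceed as above.
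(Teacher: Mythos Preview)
Your argument is correct and is the standard proof: each farthest-point cell is a finite intersection of (open or closed) half-planes bounded by perpendicular bisectors, hence convex. Your handling of the open-versus-closed convention is also sound; the paper uses the strict inequality $\|y-x_i\|>\|y-x_j\|$, so the open-half-plane version is the one that literally applies, and an intersection of open half-planes is still convex.

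As for comparison with the paper: there is nothing to compare against. The paper does not prove Fact~\ref{fact_cell_covex}; it simply lists it, along with Facts~\ref{fact_mec_center}--\ref{fact_general_position}, as a known property of the farthest-point Voronoi diagram drawn from the computational-geometry literature (the reference \cite{book2008CG}). So your write-up supplies what the paper deliberately omits. If you want to match the paper's presentation, you may simply state the fact and cite a standard source; if you want a self-contained document, keep your half-plane argument, which is short and entirely adequate.
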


\begin{Fact}
\label{fact_cell_not_empty}
$FV(x_i)$ is not empty if and only if $x_i$ is a vertex of $CH(S)$.
\end{Fact}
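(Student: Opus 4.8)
The plan is to prove the two directions of the equivalence separately, in each case using only the triangle inequality (the convexity of $x\mapsto\|y-x\|$ for a fixed query point $y$). For the ``only if'' direction I would argue by contraposition: supposing $x_i$ is \emph{not} a vertex of $CH(S)$, I will show $FV(x_i)=\emptyset$. Since deleting a non-vertex leaves the convex hull unchanged, $x_i\in CH(S\setminus\{x_i\})$, so $x_i=\sum_{j\neq i}\lambda_j x_j$ with $\lambda_j\geq 0$ and $\sum_{j\neq i}\lambda_j=1$. For an arbitrary $y$ in the plane, $\|y-x_i\|=\big\|\sum_{j\neq i}\lambda_j(y-x_j)\big\|\leq\sum_{j\neq i}\lambda_j\|y-x_j\|\leq\max_{j\neq i}\|y-x_j\|$, so some $j\neq i$ satisfies $\|y-x_j\|\geq\|y-x_i\|$; this contradicts the defining strict inequality of $FV(x_i)$. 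As $y$ was arbitrary, $FV(x_i)$ is empty, and the contrapositive is established.

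For the ``if'' direction, suppose $x_i$ is a vertex of $CH(S)$. Then $x_i$ admits a strict supporting line, i.e.\ there is a unit vector $u$ with $\langle u,\,x_i-x_j\rangle>0$ for every $j\neq i$. I would test the query points $y_t=x_i-tu$, $t>0$: a short expansion gives $\|y_t-x_i\|^2-\|y_t-x_j\|^2=2t\langle u,\,x_i-x_j\rangle-\|x_i-x_j\|^2$, which is positive as soon as $t>\|x_i-x_j\|^2\big/\big(2\langle u,\,x_i-x_j\rangle\big)$. Choosing $t$ larger than the maximum of these finitely many thresholds makes $x_i$ the unique farthest site from $y_t$, hence $y_t\in FV(x_i)$ and the cell is non-empty.

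I expect the only genuinely delicate point to be the existence of the strict supporting line used above, i.e.\ that a vertex of $CH(S)$ is an \emph{exposed} point: some linear functional attains its maximum over $S$ uniquely there. This is precisely what separates a true vertex from a point in the relative interior of a hull edge or in the hull's interior, so it should be stated with care; for a planar convex polygon it follows because the two edges incident to a vertex span an angle strictly less than $\pi$, and any direction in the open complementary cone serves as $u$. (One could instead simply cite the standard characterization of farthest-point Voronoi cells, e.g.\ from \cite{book2008CG}.) Everything else is routine --- the first direction is the triangle inequality and the second a one-variable estimate.
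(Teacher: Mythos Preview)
Your proof is correct in both directions. The contrapositive for the ``only if'' direction is clean: writing a non-vertex $x_i$ as a convex combination of the remaining points and applying the triangle inequality immediately yields $\|y-x_i\|\le\max_{j\neq i}\|y-x_j\|$ for every $y$, which blocks membership in $FV(x_i)$. The ``if'' direction via a strictly supporting direction $u$ and the one-parameter family $y_t=x_i-tu$ is also sound; the computation $\|y_t-x_i\|^2-\|y_t-x_j\|^2=2t\langle u,x_i-x_j\rangle-\|x_i-x_j\|^2$ is correct, and for $t$ large enough all these differences are positive. You are right that the only point requiring care is that a vertex of a planar convex polygon is exposed; your cone argument (the two incident edges subtend an angle strictly less than $\pi$) handles this.

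As for comparison with the paper: there is nothing to compare. The paper does not prove Fact~\ref{fact_cell_not_empty}; it is listed among several standard properties of the farthest-point Voronoi diagram taken from the literature (implicitly \cite{book2008CG}), exactly the alternative you mention in your final parenthetical. So you have supplied a self-contained proof where the paper simply cites the result.
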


\begin{Fact}
\label{fact_fvbs_chs}
$FVB(S)$ is determined by $CH(S)$.
\end{Fact}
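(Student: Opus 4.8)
The plan is first to make precise what ``determined by $CH(S)$'' should mean: the farthest-point Voronoi diagram of $S$ coincides with that of the set $V$ of vertices of $CH(S)$; equivalently, any two point sets having the same convex hull induce the same $FVB$. With this reading the statement reduces to one elementary observation about where the point of $S$ farthest from a query point can lie.

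The key lemma to establish is that for every $y\in\mathbb{R}^2$,
\[
\max_{x_i\in S}\|y-x_i\|=\max_{v\in V}\|y-v\|.
\]
Since $V\subseteq S$, the inequality ``$\ge$'' is immediate. For ``$\le$'', I would use that $x\mapsto\|y-x\|$ is a convex function on $\mathbb{R}^2$ and that $CH(S)$ is a compact convex polygon whose set of extreme points is exactly $V$; a convex function on a polytope attains its maximum at an extreme point, so $\max_{x\in CH(S)}\|y-x\|$ is attained at some $v\in V$, and because $S\subseteq CH(S)$ this already bounds $\max_{x_i\in S}\|y-x_i\|$ from above. (The sharper statement that a non-vertex of $S$ is \emph{strictly} dominated from every $y$ follows from strict convexity of $\|\cdot\|$ applied to a non-trivial representation of such a point as a convex combination of vertices; this is essentially Fact~\ref{fact_cell_not_empty} and may just be cited.)

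Next I would transfer this to the cells. By definition $FV(x_i)=\{\,y:\|y-x_i\|>\|y-x_j\|\text{ for all }j\neq i\,\}$, and by Fact~\ref{fact_cell_not_empty} this is non-empty only if $x_i\in V$. For $x_i\in V$ the lemma shows the defining inequalities against all sites of $S$ are equivalent to the inequalities against the vertices in $V\setminus\{x_i\}$ alone: if $\|y-x_i\|>\|y-v\|$ for every such $v$, then by the lemma $\|y-x_i\|>\|y-x_j\|$ for every $x_j\in S$. Hence every non-empty cell of the farthest-point Voronoi diagram of $S$ equals the corresponding cell of the farthest-point Voronoi diagram of $V$; the two subdivisions of the plane are identical, and so are their boundaries, i.e.\ $FVB(S)=FVB(V)$. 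Since $V$ is precisely the vertex set of $CH(S)$ and is recovered from it, $FVB(S)$ depends on $CH(S)$ only, which is the claim.

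The only genuine obstacle here is conceptual: identifying ``determined by $CH(S)$'' with equality of the Voronoi subdivisions, and invoking the extreme-point principle for a convex function over a polytope together with the fact that the vertices of $CH(S)$ belong to $S$. After that the argument is bookkeeping; in particular the degenerate case where some $y$ is equidistant from several vertices is harmless, since such $y$ lies on $FVB$ in both diagrams and the cell-by-cell identification above is unaffected.
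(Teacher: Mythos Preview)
Your argument is correct and is essentially the standard proof of this well-known property. Note, however, that the paper does not actually prove Fact~\ref{fact_fvbs_chs}: it is listed in the Preliminaries section among a series of facts about farthest-point Voronoi diagrams quoted from the literature (with a citation to \cite{book2008CG}) and is not accompanied by any argument. So there is no ``paper's own proof'' to compare against; your proposal supplies a valid proof where the paper simply asserts the fact.

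For completeness, your line of reasoning---that $y\mapsto\|y-x\|$ is convex in $x$, hence attains its maximum over the polygon $CH(S)$ at an extreme point, so the farthest site in $S$ from any $y$ is already a vertex of $CH(S)$, whence the farthest-point Voronoi cells and their boundary depend only on the vertex set of $CH(S)$---is exactly the argument one finds in standard references and is entirely sound.
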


\begin{Fact}
\label{fact_midperpendicular}
$E(x_i,x_j)$ lies on the midperpendicular of segment $x_ix_j$. $E^\circ(x_i,x_j)$ and segment $x_ix_j$ intersects at $\varepsilon(S)$ if and only if $\varepsilon(S)\in E^\circ(x_i,x_j)$.
\end{Fact}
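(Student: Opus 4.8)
The plan is to treat the two assertions separately. \emph{First assertion: $E(x_i,x_j)$ lies on the midperpendicular.} I would take an arbitrary point $y\in E(x_i,x_j)$. Since $E(x_i,x_j)$ is the common boundary of the cells $FV(x_i)$ and $FV(x_j)$, the point $y$ lies in the closure of each. Passing to the limit in the (strict) defining inequality of a farthest-point cell, a point of $\overline{FV(x_i)}$ satisfies $\|y-x_i\|\ge\|y-x_k\|$ for every $x_k\in S$, and symmetrically a point of $\overline{FV(x_j)}$ satisfies $\|y-x_j\|\ge\|y-x_k\|$ for every $x_k\in S$. Taking $k=j$ in the first inequality and $k=i$ in the second gives $\|y-x_i\|=\|y-x_j\|$, so $y$ lies on the perpendicular bisector of the segment $x_ix_j$. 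Hence $E(x_i,x_j)$, and in particular $E^\circ(x_i,x_j)$, is contained in that line; call it $\ell_{ij}$.

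\emph{Second assertion: the intersection with the segment.} One direction is trivial: if $E^\circ(x_i,x_j)$ and the segment $x_ix_j$ intersect \emph{at} $\varepsilon(S)$, then in particular $\varepsilon(S)\in E^\circ(x_i,x_j)$. For the converse I would assume $\varepsilon(S)\in E^\circ(x_i,x_j)$ and proceed in three steps. Step 1: show that $x_i,x_j$ are the only farthest sites from $\varepsilon(S)$. Indeed, if some $x_k$ with $k\notin\{i,j\}$ were also at distance $r(S)$ from $\varepsilon(S)$, then $\varepsilon(S)$ would additionally lie on $E(x_i,x_k)$, making it an endpoint common to at least two edges, i.e. a node $O$ of $FVB(S)$, contradicting $\varepsilon(S)\in E^\circ(x_i,x_j)$; so $r(S)=\|\varepsilon(S)-x_i\|=\|\varepsilon(S)-x_j\|>\|\varepsilon(S)-x_k\|$ for all $k\notin\{i,j\}$. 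Step 2: show $\varepsilon(S)$ is the midpoint $M$ of $x_ix_j$. Both $\varepsilon(S)$ and $M$ lie on $\ell_{ij}$, and for $y\in\ell_{ij}$ one has $\|y-x_i\|=\|y-x_j\|=\bigl(\|y-M\|^2+\tfrac14\|x_i-x_j\|^2\bigr)^{1/2}$, strictly increasing in $\|y-M\|$. If $\varepsilon(S)\ne M$, take $y$ on the segment from $\varepsilon(S)$ to $M$ close to $\varepsilon(S)$; then $\|y-x_i\|=\|y-x_j\|<r(S)$, and for a small enough step continuity together with the strict inequalities of Step 1 gives $\|y-x_k\|<r(S)$ for all remaining $k$, so $\max_k\|y-x_k\|<r(S)$, contradicting the minimality of $r(S)$. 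Hence $\varepsilon(S)=M\in x_ix_j$. Step 3: conclude. By the first assertion $E^\circ(x_i,x_j)\subseteq\ell_{ij}$, and $\ell_{ij}$ meets the segment $x_ix_j$ only at $M=\varepsilon(S)$; therefore $E^\circ(x_i,x_j)\cap x_ix_j=\{\varepsilon(S)\}$, which is exactly the claimed ``intersect at $\varepsilon(S)$''.

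\emph{Anticipated main obstacle.} The delicate part is Step 2, the perturbation argument: one must check simultaneously that moving towards $M$ along $\ell_{ij}$ strictly decreases the common distance to $x_i$ and $x_j$ (a one-line computation on $\ell_{ij}$) and that a sufficiently small such move keeps every other site strictly inside the current disc, which is precisely where the strict separation $\|\varepsilon(S)-x_k\|<r(S)$ from Step 1 is used together with continuity of $y\mapsto\|y-x_k\|$. It is irrelevant whether the perturbed point $y$ leaves $E^\circ(x_i,x_j)$ — all that is used is that $y$ determines a strictly smaller enclosing circle, contradicting optimality of $\varepsilon(S)$. (Alternatively one could quote the classical dichotomy that $MEC(S)$ is supported either by two diametrically opposite points of $S$ or by three points of $S$ forming an acute triangle, and read off that $\varepsilon(S)$ lies on an open Voronoi edge precisely in the first case; but establishing that dichotomy is essentially the same perturbation.)
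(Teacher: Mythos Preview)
The paper does not actually prove this statement: it appears as one of a list of ``Facts'' in the Preliminaries section (Fact~2.6), summarized as known properties of the farthest-point Voronoi diagram and stated without argument. So there is no proof in the paper to compare against.

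Your proposal is correct and supplies exactly the justification the paper omits. The first assertion is the standard observation that points on the common boundary of $\overline{FV(x_i)}$ and $\overline{FV(x_j)}$ are equidistant from $x_i$ and $x_j$. For the second assertion, your three-step argument is sound: Step~1 correctly uses that an interior point of an edge cannot be a node to rule out a third farthest site; Step~2 is the right perturbation (move along $\ell_{ij}$ towards the midpoint, use strict inequalities from Step~1 plus continuity); Step~3 is immediate from the geometry of a perpendicular bisector meeting the segment at a single point. Your remark that the perturbed point need not stay on $E^\circ(x_i,x_j)$ is well taken and preempts a common confusion. The alternative route you mention---the classical two-point/three-point dichotomy for $MEC(S)$---is indeed equivalent and is how most readers would recognise the fact at a glance, which is presumably why the authors felt no proof was needed.
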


\begin{Fact}
\label{fact_infty}
$\infty$ is a special node in $FVB(S)$.
\end{Fact}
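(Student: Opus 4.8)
The statement is, in essence, a notational convention resting on a standard structural property of the farthest-point Voronoi diagram, so the plan is to recall that property, explain in what sense $\infty$ is a node of $FVB(S)$, and then point out why it deserves to be called \emph{special}.

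First I would invoke the classical structure of $FVB(S)$ already encoded in Facts \ref{fact_cell_covex}--\ref{fact_fvbs_chs}: the nonempty cells are exactly the sets $FV(x_i)$ with $x_i$ a vertex of $CH(S)$, each such cell is convex and unbounded, and the union of all the cell boundaries contains no bounded face, hence $FVB(S)$ --- regarded as a graph whose vertices are the finite nodes $O$ and whose edges are the maximal straight pieces $E(x_i,x_j)$ --- is a finite tree. In particular $FVB(S)$ has unbounded edges, and each of them is a ray (degenerating to a full line only in the collinear case). A short sweep argument --- follow a circle of radius $R\to\infty$ and record which cell the moving point lies in --- shows that these unbounded edges are met in the cyclic order of the vertices of $CH(S)$, exactly one between each pair of hull-consecutive cells, so their number equals the number $h$ of vertices of $CH(S)$.

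Next I would pass to the one-point compactification $\mathbb{R}^2\cup\{\infty\}$ (a topological $2$-sphere). There every unbounded edge of $FVB(S)$ has the single additional limit point $\infty$, so adjoining $\infty$ and declaring each unbounded edge to terminate at $\infty$ turns $FVB(S)$ into a finite connected plane graph in which $\infty$ is a vertex incident to exactly $h$ edges: this is the precise sense in which $\infty$ is a node of $FVB(S)$. It is a \emph{special} node because it is not a point of the Euclidean plane --- it therefore carries none of the metric meaning of an ordinary node $O$, which is equidistant from the (at least three) sites whose cells meet there --- and because it is simultaneously the common endpoint of \emph{all} the unbounded edges; in the tree picture it is natural to take $\infty$ as the root to which all the ray-edges descend, and this is exactly how it will be used when the paper later builds its tree structure on $FVB(S)$.

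I do not expect a genuine obstacle: every step is either a restatement of Facts \ref{fact_cell_covex}--\ref{fact_fvbs_chs} or a routine topological remark, the only mildly substantive point being the count of unbounded edges, which the large-circle sweep settles (and which is in any case part of the standard description of the farthest-point Voronoi diagram as a tree with $h$ unbounded edges). The content of the Fact is organizational rather than deep: it licenses treating $\infty$ on the same footing as the finite nodes $O$ in the subsequent constructions.
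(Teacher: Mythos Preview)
The paper does not prove this Fact at all: it is listed among the ``important properties of the farthest Voronoi diagram \ldots\ summarized below'' and functions purely as a notational convention, recording that the authors will treat $\infty$ as the common endpoint of every unbounded edge so that $FVB(S)$ becomes a finite graph. Your proposal supplies a perfectly reasonable justification for that convention via the one-point compactification, and the observation that the unbounded rays are in cyclic correspondence with the edges of $CH(S)$ is correct and standard. In that sense you have done more than the paper does.

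One factual slip to correct: you write that it is natural to take $\infty$ as the \emph{root} of the tree and that ``this is exactly how it will be used when the paper later builds its tree structure.'' In fact the paper does the opposite. In the definition of the division tree $DV(S)$ the root is $\varepsilon(S)$, the centre of the minimum enclosing circle, and the paper states explicitly that ``the leaf node of $DV(S)$ locates at $\infty$.'' So $\infty$ is a leaf (indeed, the unique sink at which every root-to-leaf path terminates), not the root. This does not affect the correctness of your explanation of why $\infty$ is a node, but the closing sentence about the paper's later use should be amended.
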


A set $S=\{x_1,x_2,...,x_n\}$ of $n$ distinct points in the plane is said to be in general position if no four points of $S$ are co-circular. Only point set in general position is studied in this article.
\begin{Fact}
\label{fact_general_position}
If $S$ is in general position, all nodes in $FVB(S)$ except $\infty$ are the intersection of three adjacent edges.
\end{Fact}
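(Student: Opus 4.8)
The plan is to pin down the degree of a finite node $O$ of $FVB(S)$ by counting the sites of $S$ that are farthest from $O$. For a point $y$ in the plane write $d(y)=\max_i\|y-x_i\|$ and let $\mathcal{F}(y)=\{\,x_i\in S:\|y-x_i\|=d(y)\,\}$; every site of $\mathcal{F}(y)$ lies on the circle of radius $d(y)$ centred at $y$, and since $S$ contains at least two distinct points we have $d(y)>0$, so that circle is nondegenerate. I will show that $|\mathcal{F}(O)|=3$ for every node $O\neq\infty$ and then read off the local picture of $FVB(S)$ at $O$.

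For the lower bound $|\mathcal{F}(O)|\ge 3$: by the definition of a node, $O$ lies on some edge $E(x_a,x_b)$, and every interior point $y$ of that edge satisfies $\|y-x_a\|=\|y-x_b\|=d(y)>\|y-x_j\|$ for $j\notin\{a,b\}$; letting $y\to O$ along the edge, the limiting (non-strict) inequalities give $x_a,x_b\in\mathcal{F}(O)$. Suppose $\mathcal{F}(O)=\{x_a,x_b\}$. Then $\|O-x_j\|<d(O)$ for every other $j$, so by continuity there is a neighbourhood $U$ of $O$ on which each such $x_j$ stays strictly closer than $x_a$; hence on $U$ a point $y$ lies in $FVB(S)$ if and only if $\|y-x_a\|=\|y-x_b\|$, so $FVB(S)\cap U$ is a single line through $O$ and $O$ is an interior point of the one edge $E(x_a,x_b)$ --- contradicting that a node is the intersection of at least two edges. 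Thus $|\mathcal{F}(O)|\ge 3$. For the upper bound, if $|\mathcal{F}(O)|\ge 4$ then four distinct sites of $S$ lie on the circle of radius $d(O)$ centred at $O$, i.e.\ are co-circular, contradicting that $S$ is in general position. Hence $|\mathcal{F}(O)|=3$.

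Write $\mathcal{F}(O)=\{x_i,x_j,x_k\}$. These three points lie on a circle of positive radius, hence are non-collinear, and $O$ is their circumcentre. As before, choose a neighbourhood $U$ of $O$ on which every site outside $\{x_i,x_j,x_k\}$ is strictly closer to $y$ than $x_i,x_j,x_k$; then on $U$ the set $FVB(S)$ is exactly the locus where at least two of $\|y-x_i\|,\|y-x_j\|,\|y-x_k\|$ attain the maximum, which is the union of three rays issuing from $O$, one along each pairwise perpendicular bisector of $x_i,x_j,x_k$ (in each case the half on which the third of the three sites is the nearest). The three supporting lines are pairwise distinct, being the perpendicular bisectors of the three sides of a genuine triangle, and none of the three half-bisectors collapses to the single point $O$, because along a pairwise bisector the distance to the third site is non-constant. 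Consequently $O$ is incident to exactly three edges of $FVB(S)$ and, being the only edges there, they are mutually adjacent; alternatively one may finish by observing that the cells and edges of a planar subdivision alternate cyclically around any point, so the number of edges at $O$ equals the number of cells at $O$, which is $|\mathcal{F}(O)|=3$. I expect this last local analysis to be the main obstacle: one must make precise the ``pass to a small neighbourhood and discard the non-farthest sites'' argument and then exclude both a spurious fourth edge at $O$ and the degeneration of any of the three expected rays into a point; by comparison the co-circularity upper bound and the lower bound are routine.
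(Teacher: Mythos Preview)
The paper does not prove this statement; it is one of the preliminary ``Facts'' in Section~2, all of which are quoted without argument as standard properties of the farthest-point Voronoi diagram (with \cite{book2008CG} as the background reference). There is therefore no proof in the paper to compare your attempt against.

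Your argument is correct and is essentially the standard one: a finite node $O$ must have $|\mathcal{F}(O)|\ge 3$ (otherwise $O$ is an interior point of a single edge, not a node) and $|\mathcal{F}(O)|\le 3$ by the general-position hypothesis; restricting to a neighbourhood in which only these three sites can be farthest then reduces the local picture to the farthest-point diagram of a genuine triangle, which has exactly three edges through its circumcentre. The one step worth tightening is the non-degeneracy of each ray. Saying ``the distance to the third site is non-constant along the bisector'' is true but not quite the right invariant; what you actually need is that the affine function $y\mapsto\|y-x_i\|^2-\|y-x_k\|^2$ is non-constant along the perpendicular bisector of $x_ix_j$, which fails only when $x_k-x_i$ is parallel to $x_j-x_i$, i.e.\ when the three sites are collinear. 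Since that affine function vanishes at $O$, it is strictly positive on exactly one open half of the bisector, and that half is the germ of the edge $E(x_i,x_j)$ at $O$.
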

\begin{Fact}
\label{fact_adjointcircle}
Assume in $FVB(S)$ three edges $E_1(x_i,x_j)$, $E_2(x_i,x_k)$ and $E_3(x_k,x_j)$ intersect at node $O$, then $O$ is the center of the circle which passes through the three static points $x_i$, $x_j$ and $x_k$.
\end{Fact}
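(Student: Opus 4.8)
The plan is to read off the conclusion directly from Fact~\ref{fact_midperpendicular}, which says that each edge $E(x_a,x_b)$ of $FVB(S)$ lies on the midperpendicular (perpendicular bisector) of the segment $x_ax_b$, hence every point of that edge is equidistant from $x_a$ and $x_b$. First I would apply this to the hypothesis that $O$ lies on $E_1(x_i,x_j)$: this gives $\|O-x_i\|=\|O-x_j\|$. Applying it again to $O\in E_2(x_i,x_k)$ gives $\|O-x_i\|=\|O-x_k\|$, and to $O\in E_3(x_k,x_j)$ gives $\|O-x_k\|=\|O-x_j\|$ (this last equality being redundant). Chaining the first two yields $\|O-x_i\|=\|O-x_j\|=\|O-x_k\|=:R$, so $O$ is a point equidistant from all three sites.

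Next I would verify that ``the circle through $x_i$, $x_j$, $x_k$'' is well defined, i.e.\ that these three points are not collinear. By Fact~\ref{fact_cell_not_empty} each of $x_i,x_j,x_k$ is a vertex of $CH(S)$, and a vertex of a convex polygon cannot lie on the segment joining two other vertices; therefore no three vertices of $CH(S)$ are collinear, in particular $x_i,x_j,x_k$ are not. Consequently there is a unique circle through the three points, and its center is the unique point of the plane equidistant from all three (the circumcenter of the triangle $x_ix_jx_k$, obtained as the intersection of the three perpendicular bisectors). Since $O$ has exactly this equidistance property, $O$ must be that center, and the common distance $R$ is the radius.

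One point I would state explicitly for completeness is that $O$ is a \emph{finite} node: at the special node $\infty$ (Fact~\ref{fact_infty}) the distances $\|O-x_i\|$ are not finite and the argument does not apply, so the statement is to be understood for ordinary (finite) nodes, which by Fact~\ref{fact_general_position} are precisely where exactly three edges meet. I do not anticipate any genuine obstacle: the entire content is the equidistance chain furnished by Fact~\ref{fact_midperpendicular} together with the non-collinearity of convex-hull vertices, and the only thing requiring care is to restrict attention to finite $O$ rather than overclaiming at $\infty$.
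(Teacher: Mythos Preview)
Your argument is correct and is exactly the standard one. Note, however, that in the paper this statement is listed among the preliminary ``Facts'' about the farthest-point Voronoi diagram and is \emph{not} given a proof there; so there is no paper-proof to compare against. Your derivation---equidistance from Fact~\ref{fact_midperpendicular} plus uniqueness of the circumcenter once non-collinearity is established---is the natural justification and would serve perfectly well as the missing proof. The only remark is that your non-collinearity step (``no three vertices of $CH(S)$ are collinear'') tacitly uses that a \emph{vertex} of the convex hull is an extreme point of $S$; this is the intended meaning in the paper, so the step is fine, but it is worth saying explicitly since the word ``vertex'' is sometimes used loosely.
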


\section{The restricted feasible set}
The feasible set of problem $M(W_1)$ is the whole Euclidean plane. In this section, we will prove that the optimal solution of problem $M(W_1)$ must lie on $FVB(S)$ once the weight point satisfies an ordinary condition. Thus we just need to locate the optimal solution in a subset of the plane. 

\label{section_location}
\begin{figure}
  \centering
  \includegraphics[width=0.4\linewidth]{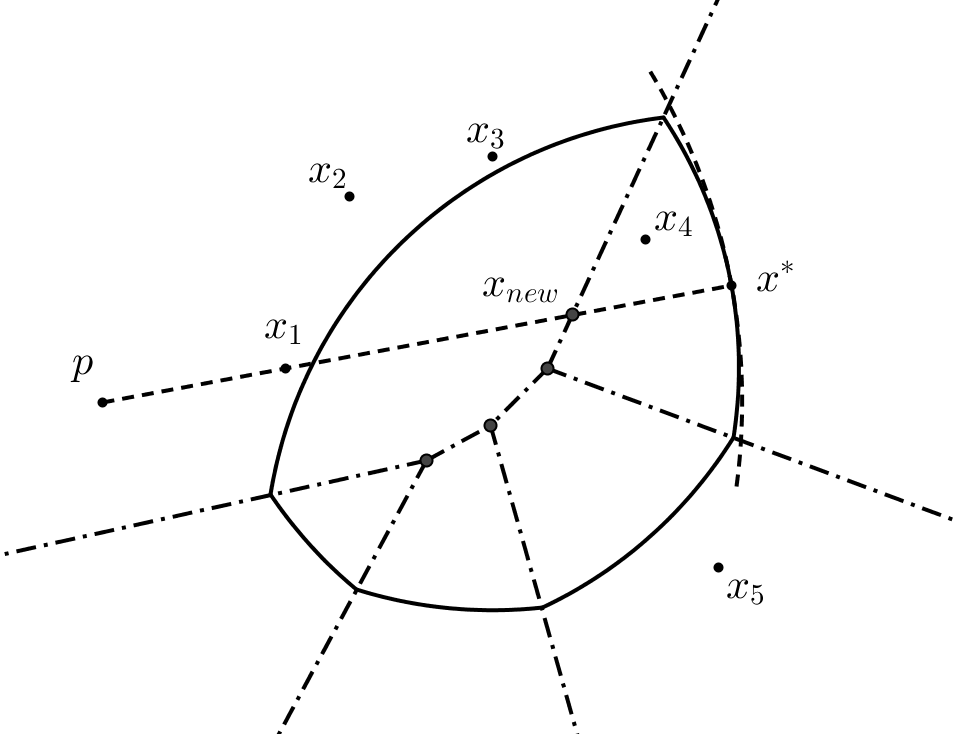}
  \caption{A case of $U(r)$ when $r>r(S)$ and illustration for the proof of Theorem \ref{theorem_location}.}
  \label{pic_proof_location_optimal_solution}
\end{figure}
\begin{defn}
$U(r)$ is a subset of the Euclidean plane. For any point $x$ in $U(r)$, the maximum distance from $x$ to points in $S$ is equal to $r$,
\begin{equation*}
U(r) = \{x|\max_{i}\|x-x_i\|=r,\ x_i\in S\}.
\end{equation*}
\end{defn}
According to the definition of $U(r)$, problem $M(W_1)$ could be represented as,
\begin{equation}
M(W_2):\quad \max_{r}\quad\frac{1}{r}\max_{x\in U(r)}\|x-p\|,
\end{equation}
where $\max_{x\in U(r)}\|x-p\|$ is a subproblem of finding from $U(r)$ the point farthest to the weight point $p$.

\begin{Theorem}
\label{theorem_location}
The optimal solution of $M(W_2)$ lies on $FVB(S)$ if $p$ does not coincide with any vertex of $CH(S)$.
\end{Theorem}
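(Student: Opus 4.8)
\emph{Proof proposal.} I would pass to the equivalent form $M(W_1)$ and write $f(x)=\|x-p\|/\max_i\|x-x_i\|$. Since the points of $S$ are distinct, $\max_i\|x-x_i\|\ge r(S)>0$ at every $x$, so $f$ is a continuous function on the whole plane and any optimal solution of $M(W_2)$, regarded as a point $x^{\star}$, is a maximizer of $f$. The plan is to prove that \emph{no point lying off $FVB(S)$ can be a local maximum of $f$}; this at once forces every optimal $x^{\star}$ onto $FVB(S)$. One first disposes of the degenerate case $p\in CH(S)$: there $p$ is not a vertex of $CH(S\cup\{p\})$, so $f\le 1$ everywhere, while $f\to 1$ as $x\to\infty$; the supremum is then approached only at infinity and the optimum is $\infty$, which lies on $FVB(S)$ by Fact~\ref{fact_infty}. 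So from the remaining case on, $f$ attains its maximum at a finite point.

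Now let $x_0\notin FVB(S)$; I claim $x_0$ is not a local maximum of $f$. Since the open Voronoi cells together with $FVB(S)$ cover the plane, $x_0$ lies in a unique open cell $FV(x_i)$; this cell is nonempty, so by Fact~\ref{fact_cell_not_empty} $x_i$ is a vertex of $CH(S)$, whence $x_i\neq p$ by hypothesis, and by Fact~\ref{fact_outsidecell} also $x_0\neq x_i$. If $x_0=p$ the claim is immediate, as then $f(x_0)=0<f(x)$ for $x$ near $x_0$; so assume $x_0\neq p$. On a neighbourhood of $x_0$ inside $FV(x_i)$ the farthest site is constantly $x_i$, so there $f(x)=\|x-p\|/\|x-x_i\|$, which is smooth at $x_0$, and a direct computation gives
\begin{equation*}
\nabla\log f(x_0)=\frac{x_0-p}{\|x_0-p\|^{2}}-\frac{x_0-x_i}{\|x_0-x_i\|^{2}}.
\end{equation*}
The inversion $v\mapsto v/\|v\|^{2}$ carries a nonzero vector to a \emph{positive} scalar multiple of itself of reciprocal length, so the right-hand side vanishes only if $x_0-p$ and $x_0-x_i$ have the same direction and the same length, i.e.\ only if $p=x_i$, which is excluded. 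Hence $\nabla f(x_0)\neq 0$, $x_0$ is not a local maximum, and every optimal solution lies on $FVB(S)$. (Equivalently: near $x_0$ the level sets of $f$ are arcs of the Apollonius circles of the pair $\{p,x_i\}$, which partition the plane minus $\{p,x_i\}$, so $f$ has no interior extremum away from $p$ and $x_i$.)

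There is also a more geometric route matching the $U(r)$ picture. If the optimal radius is $r^{\star}=r(S)$ then $U(r^{\star})=\{\varepsilon(S)\}$ and we are done by Fact~\ref{fact_mec_center}. If $r^{\star}>r(S)$ and the optimal point $x^{\star}\notin FVB(S)$, then $x^{\star}$ lies in the relative interior of one of the circular arcs composing $U(r^{\star})$ --- an arc of the circle $\{x:\|x-x_i\|=r^{\star}\}$ for the cell $FV(x_i)$ containing $x^{\star}$. Since $\|x-p\|^{2}$ restricted to that circle has the form $a+b\cos\theta$ with $b=2r^{\star}\|x_i-p\|>0$, it has a unique maximum on the circle, so $x^{\star}$ must be the point antipodal to $p$ through $x_i$; thus $p,x_i,x^{\star}$ are collinear with $x_i$ between $p$ and $x^{\star}$, and $f(x^{\star})=1+\|x_i-p\|/r^{\star}$. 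Taking $r$ slightly below $r^{\star}$ and keeping the collinear point $q(r)=x_i+r\,(x_i-p)/\|x_i-p\|$, each constraint $\|q(r)-x_j\|\le r$ holds strictly at $r=r^{\star}$ precisely because $x^{\star}\notin FVB(S)$, hence by continuity still holds for $r$ near $r^{\star}$, so $q(r)\in U(r)$ and the objective of $M(W_2)$ at this $r$ is at least $1+\|x_i-p\|/r>f(x^{\star})$, contradicting optimality.

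The differential (or trigonometric) computation is routine; the delicate part is the bookkeeping at the ends of the admissible range: confirming that a finite maximum is actually attained when $p\notin CH(S)$, handling the sub-case in which the optimum is $\varepsilon(S)$, and --- in the $U(r)$ variant --- verifying that the perturbed point $q(r)$ genuinely stays on $U(r)$ for $r$ just below $r^{\star}$. That last point is exactly where the hypothesis $x^{\star}\notin FVB(S)$, which supplies the strict inequalities $\|x^{\star}-x_j\|<r^{\star}$ for $j\neq i$, is indispensable.
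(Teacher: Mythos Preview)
Your second, ``$U(r)$-picture'' argument is essentially the paper's own proof: both observe that an interior optimum $x^\star$ in an open cell $FV(x_i)$ must lie on the ray from $p$ through $x_i$ (by the tangency/critical-point argument on the circular arc), and then perturb along that ray to improve the ratio. The only cosmetic difference is that the paper slides $x^\star$ along the ray all the way to the cell boundary, producing an explicit point $x_{\text{new}}\in FVB(S)$ with strictly larger objective, whereas you parametrize the same ray by $r$ and take $r$ slightly below $r^\star$; the contradiction is identical.

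Your first approach---computing $\nabla\log f$ on the open cell and observing that the inversion $v\mapsto v/\|v\|^2$ is injective, so the gradient vanishes only when $p=x_i$---is a genuinely different and somewhat slicker route. It avoids the geometric bookkeeping of arcs and tangencies entirely and shows in one stroke that $f$ has no critical point in any open cell. The trade-off is that the paper's version is self-contained at the level of elementary geometry (ratios on a line, convexity of cells), whereas yours invokes differentiability and the case split $p\in CH(S)$ versus $p\notin CH(S)$ to guarantee that a finite maximizer exists. That split is correct (when $p\notin CH(S)$ the farthest-point cell of $p$ in $S\cup\{p\}$ is nonempty, so $f>1$ somewhere and the superlevel set $\{f\ge 1+\varepsilon\}$ is compact), but you assert it without proof; a one-line justification would close that small gap.
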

\begin{proof}
(Prove by contradiction.) Fig. \ref{pic_proof_location_optimal_solution} shows an illustration for the proof. As $r(S)=\min_x \max_i \|x-x_i\|$, $U(r)$ is an empty set if $r<r(S)$ and $U(r)=\{\varepsilon(S)\}$ if $r=r(S)$. Since $\varepsilon(S)$ lies in $FVB(S)$ (Fact \ref{fact_mec_center}), only the cases in which $r>r(S)$ need to be considered.
If $r>r(S)$, $U(r)$ consists of a group of connected arcs. Static point $x_i$ is the center of the arc lying in $FV(x_i)$ and the intersection of two connected arcs lies on $FVB(S)$. Supposing the optimal solution $x^*$ of problem $M(W_2)$ does not lie on $FVB(S)$, it must lie inside some nonempty cell of the farthest-point Voronoi diagram of $S$, for example $FV(x_1)$. Denote the arc in $FV(x_1)$ which centers at $x_1$ and passes through $x^*$ by $arc(x_1,x^*)$. Then $x^*$ is also the optimal solution of the subproblem $max_{x\in arc(x_1,x^*)}\|x-p\|$. To guarantee $x^*$ is the farthest point to $p$ in $arc(x_1,x^*)$, $arc(p,x^*)$ and $arc(x_1,x^*)$ should be tangent at $x^*$, and $x_1$ should lie on segment $x^*p$. Since $x_1$ lies outside $FV(x_1)$ (Fact \ref{fact_outsidecell}) and $x^*$ lies inside $FV(x_1)$, segment $x_1x^*$ intersects with the boundary of the convex set $FV(x_1)$ (Fact \ref{fact_cell_covex}). Denote the intersection point by $x_{new}$, then $x_{new}$ is a point in $FVB(S)$. Since $p$ lies outside segment $x_1x^*$, we have $\|x_{new}-p\|/\|x_{new}-x_1\|\geq\|x^*-p\|/\|x^*-x_1\|$, and the equality holds if and only if $p=x_1$. Thus if $p$ does not coincide with $x_1$, there exists a point $x_{new}$ in $FVB(S)$ with larger objective function value than $x^*$ . Because $FV(x_1)$ is not empty if and only if $x_1$ is a vertex of $CH(S)$ (Fact \ref{fact_cell_not_empty}), points not in $FVB(S)$ could not be the optimal solution if $p$ does not coincide with any vertex of $CH(S)$.
\end{proof}

If $p$ coincides with vertex $x_i$ of $CH(S)$, the optimal solution is any point in cell $FV(x_i)$. 
According to Theorem \ref{theorem_location}, problem $M(W_1)$ and $M(W_2)$ is equivalent to the following one when the weight point does not coincide with any vertex of $CH(S)$,
\begin{equation}
M(W_3):\quad \max_{x}\frac{\|x-p\|}{\max_{i}\|x-x_i\|}, \quad x\in FVB(S).
\end{equation}
The feasible set of problem $M(W_3)$ is $FVB(S)$ instead of the whole Euclidean plane in problem $M(W_2)$. $FVB(S)$ is a combination of edges. Thus in order to calculate the optimal solution in $FVB(S)$, we could first calculate the optimal solutions on each single edge $E(x_i,x_j)$ of $FVB(S)$ and then choose the one with the largest objective function value.
\begin{equation}
M(W_4):\quad \max_{E(x_i,x_j)\subset FVB(S)}\max_{x\in E(x_i,x_j)}\frac{\|x-p\|}{\max_{k}\|x-x_k\|}.
\end{equation}
Since $FVB(S)$ is determined by $CH(S)$ (Fact \ref{fact_fvbs_chs}), we assume all points in $S$ are the vertex of $CH(S)$ in the following article.
\section{Optimal solution on a single edge of $FVB(S)$}
\label{section_os_single_edge}
In this section, we investigate the optimal solution on a single edge of $FVB(S)$, which is a subproblem of problem $M(W_4)$,
\begin{equation}
M_s(W_4):\quad \max_x\frac{\|x-p\|}{\max_{k}\|x-x_k\|}, \quad x\in E(x_i,x_j).
\end{equation}
Since $E(x_i,x_j)$ lies on the midperpendicular of segment $x_ix_j$ (Fact \ref{fact_midperpendicular}) and $E(x_i,x_j)$ is the boundary between cell $FV(x_i)$ and $FV(x_j)$, the maximum distance from $x$ to points in set $S$ is $\|x-x_i\|$, or equivalently $\|x-x_j\|$. The objective function of subproblem $M_s(W_4)$ is,
\begin{equation*}
f(x) = \frac{\|x-p\|}{\|x-x_i\|}, \quad x\in E(x_i,x_j).
\end{equation*}

From Fact \ref{fact_midperpendicular}, $E^\circ(x_i,x_j)$ and segment $x_ix_j$ intersect at $\varepsilon(S)$ if and only if $\varepsilon(S)\in E^\circ(x_i,x_j)$. Thus there is at most one edge in $FVB(S)$ satisfying $\varepsilon(S)\in E^\circ(x_i,x_j)$. If there does exist one such edge, we divide the edge into two parts from $\varepsilon(S)$ and treat $\varepsilon(S)$ as a node in $FVB(S)$. After the above process, all edges in $FVB(S)$ lie on one side of the line passing through $x_i$ and $x_j$. Edges in $FVB(S)$ could be classified into two categories, the unbounded edge and the bounded edge. Both two kinds of edges start from one node $O_1$ and end at another node $O_2$ ($O_2=\infty$ if the edge is half-unbounded). $O_1$ is nearer to segment $x_ix_j$ than $O_2$. Segment $x_ix_j$ intersects with $E(x_i,x_j)$ at the starting node $O_1$ or there is no intersection between segment $x_ix_j$ and $E(x_i,x_j)$. The localization of the optimal solution on two kinds of edges will be discussed successively.

\subsection{Unbounded edge}
\begin{figure*}
  \centering
  \includegraphics[width=0.3\linewidth]{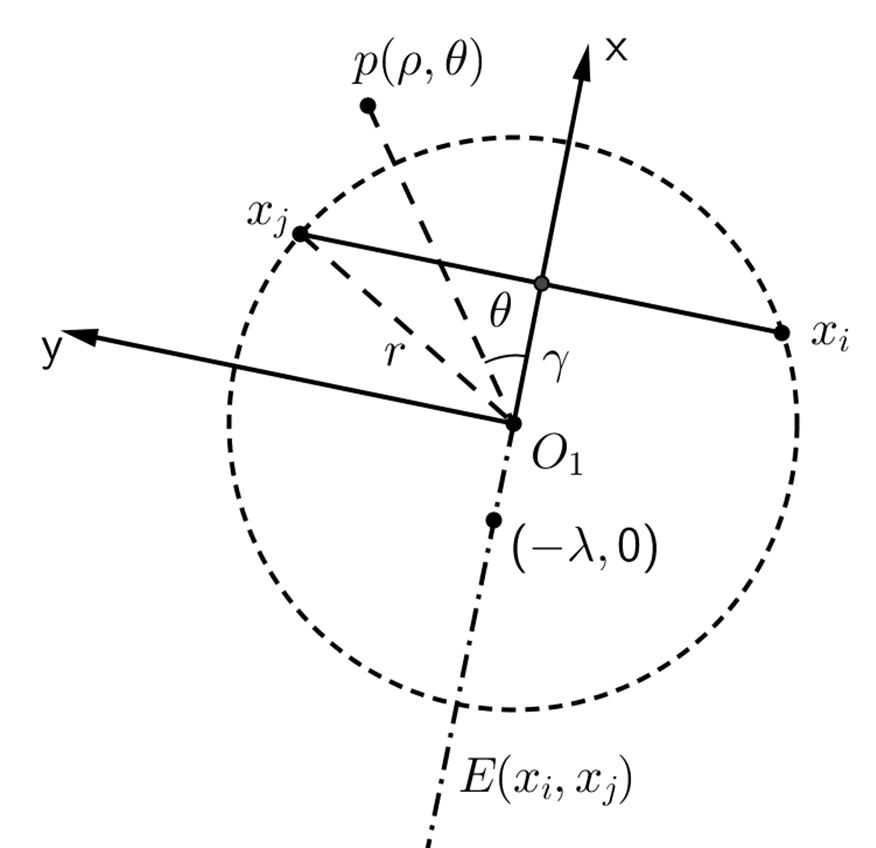}
  \caption{The coordinate system for a single edge in $FVB(S)$}
  \label{pic_unbounded_edge}
\end{figure*}
As shown in Fig. \ref{pic_unbounded_edge}, $E(x_i,x_j)$ is an unbounded edge starting from node $O_1$. The distance from $O_1$ to segment $x_ix_j$ is $\gamma$. The distance from $O_1$ to $x_i$ (or $x_j$) is $r$. Choose a coordinate system such that $O_1$ is the origin and $E(x_i,x_j)$ coincides with the negative x-axis. Denote the feasible solution on $E(x_i,x_j)$ by $x=(-\lambda,0)$ and the weight point by its polar coordinate $p=(\rho,\theta)$. The objective function is,
\begin{equation}
\label{eq_objectfunction_unbounded}
f(\lambda)=\sqrt{\frac{\rho^2+2\rho\lambda\cos\theta+\lambda^2}{r^2+2\gamma\lambda+\lambda^2}} \quad(\lambda\geq0).
\end{equation}
To obtain the optimal solution, we calculate the derivative of $f(\lambda)$,
\begin{equation*}
\frac{\partial f}{\partial \lambda} = \frac{1}{(r^2+2\gamma\lambda+\lambda^2)f(\lambda)}
[(\gamma-\rho\cos\theta)\lambda^2+(r^2-\rho^2)\lambda+(\rho r^2\cos\theta-\rho^2\gamma)].
\end{equation*}
The sign of the derivative depends on the quadratic polynomial in bracket. For fixed parameters $\gamma$ and $r$, the signs of three coefficients of the quadratic polynomial $a=\gamma-\rho\cos\theta$, $b=r^2-\rho^2$ and $c=\rho r^2\cos\theta-\rho^2\gamma$ are determined by the location of the weight point $p$. Equation $a=0$ is a line passing through $x_i$ and $x_j$. Equation $b=0$ represents a circle centered at node $O_1$ and passing through $x_i$ and $x_j$.

If $O_1$ does not lie on segment $x_ix_j$ ($\gamma>0$), equation $c=0$ represents a circle passing through $x_i$, $x_j$ and $O_1$. The circle center is $(r^2/2\gamma,0)$. As shown in Fig. \ref{pic_division_unbounded}(a), three boundaries $a=0$, $b=0$ and $c=0$ divide the plane into six non-overlapping regions. When the weight point $p$ lies in one of the six regions, the coefficient signs, the changes of the objective function and the optimal solutions are listed in Table \ref{table_trend_unbounded}. As the interior of each region is considered in this step, there are no equal signs in the table. When $p$ lies in region $\large{\textcircled{\small{2}}}$, $\large{\textcircled{\small{3}}}$, $\large{\textcircled{\small{4}}}$ or $\large{\textcircled{\small{5}}}$, there exists one positive $\lambda^*$ satisfying $f'(\lambda^*)=0$,
\begin{equation}
\label{eq_lambda}
\lambda^*=-\frac{b}{2a}+\sqrt{\frac{b^2-4ac}{4a^2}}.
\end{equation}
The function values at two endpoints of the unbounded edge are $f(0)=\rho/r$ and $f(\infty)=1$. When $p$ lies in region $\large{\textcircled{\small{2}}}$, $b=r^2-\rho^2<0$, thus $r<\rho$ and $f(0)>f(\infty)$. When $p$ lies in region $\large{\textcircled{\small{4}}}$, $b=r^2-\rho^2>0$, thus $r>\rho$ and $f(0)<f(\infty)$.
From the table we could see different regions may correspond to the same optimal solution, thus the above six regions could be degenerated into three. As shown in Fig. \ref{pic_division_unbounded}(b), the boundaries of the three regions are two arcs $c=0\wedge b<0$, $b=0\wedge a>0$ and one segment $a=0\wedge b>0$. We also use $E:c=0$, $E:b=0$, $E:a=0$ to denote the three boundaries, in which the prefix indicates the edge the boundary associates with.

When $\gamma\rightarrow 0$, we have $r^2/2\gamma\rightarrow\infty$. Thus boundary $E:c=0$ lies on the line passing through $x_i$ and $x_j$ if $O_1$ lies on segment $x_ix_j$ ($\gamma=0$). As shown in Fig. \ref{pic_division_unbounded}(c), the plane still could be divided into three regions and the optimal solutions of each region correspond to either one endpoint or the interior of the unbounded edge.

\begin{figure}
  \centering
  \includegraphics[width=0.9\linewidth]{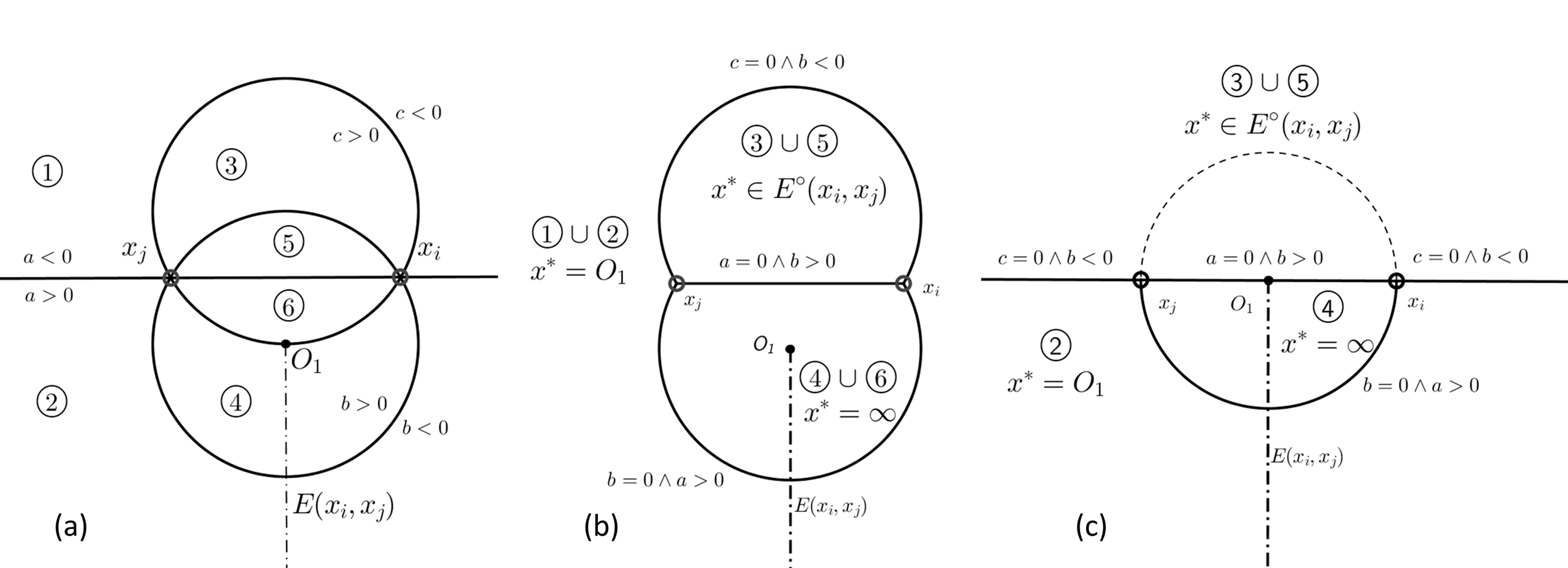}
  \caption{Plane division for a single unbounded edge. (a) Original, $\gamma>0$, (b) Degenerate, $\gamma>0$, (c) Degenerate, $\gamma=0$.}
  \label{pic_division_unbounded}
\end{figure}

\begin{table}
  \caption{Optimal solution on a single unbounded edge}
  \label{table_trend_unbounded}
  \renewcommand{\arraystretch}{1.5}
  \centering
  \small
  \begin{tabular}{|*{6}{c|}}
  \hline
  Region & Coefficient Signs   & $f(\lambda)$   & Value Comparison  & Optimal Value & Optimal Solution \\ \hline
  $\large{\textcircled{\small{1}}}$ & $a<0,\ b<0,\ c<0$ & $(0,+\infty)\searrow$ & $f(0)>f(\infty)$ & $f(0)$ & $O_1$\\ \hline
  $\large{\textcircled{\small{2}}}$ & $a>0,\ b<0,\ c<0$ &$(0,\lambda^*)\searrow(\lambda^*,+\infty)\nearrow$ & $f(0)>f(\infty)$ & $f(0)$ & $O_1$\\ \hline
  $\large{\textcircled{\small{3}}}\cup\large{\textcircled{\small{5}}}$ & $a<0,\ c>0$ &$(0,\lambda^*)\nearrow(\lambda^*,+\infty)\searrow$ & $f(\lambda^*)> \max(f(0),f(\infty))$ & $f(\lambda^*)$ & $\in E^\circ(x_i,x_j)$ \\ \hline
  $\large{\textcircled{\small{4}}}$ &$a>0,\ b>0,\ c<0$&$(0,\lambda^*)\searrow(\lambda^*,+\infty)\nearrow$ & $f(0)<f(\infty)$ & $f(\infty)$ & $\infty$ \\ \hline
  $\large{\textcircled{\small{6}}}$ &$a>0,\ b>0,\ c>0$ &$(0,+\infty)\nearrow$ & $f(0)<f(\infty)$ & $f(\infty)$ & $\infty$ \\ \hline
  \end{tabular}
\end{table}

\subsection{Bounded edge}
\label{subsection_bounded_edge}
Denote the length of the bounded edge by $\delta$ and choose the same coordinate system as the case of unbounded edge, the objective function is the same with Eq. \eqref{eq_objectfunction_unbounded} except that the function domain changes from $\lambda\geq 0$ to $0\leq\lambda\leq\sigma$. Based on the analysis of the previous subsection, the optimal solution on a bounded edge is studied below.

Similarly, we first study the cases in which $O_1$ does not lie on segment $x_ix_j$ ($\gamma>0$). When $p$ lies in region $\large{\textcircled{\small{1}}}$, $f(\lambda)$ decreases monotonically in $(0,\delta)$, thus the optimal solution is $O_1$. When $p$ lies in region $\large{\textcircled{\small{6}}}$, $f(\lambda)$ increases monotonically in $(0,\delta)$, thus the optimal solution is $O_2$.

When $p$ lies in region $\large{\textcircled{\small{3}}}\cup\large{\textcircled{\small{5}}}$, $f(\lambda)$ increases in $(0,\lambda^*)$ and then decreases in $(\lambda^*,+\infty)$. Thus if $\lambda^*\geq\delta$, $f(\lambda)$ increases monotonically in $(0,\delta)$, and if $\lambda^*<\delta$, $f(\lambda)$ increases in $(0,\lambda^*)$ then decreases in $(\lambda^*,\delta)$. From Eq. \eqref{eq_lambda}, $\lambda^*$ is determined by the location of the weight point $p$ given parameter $\gamma$ and $r$, therefore region $\large{\textcircled{\small{3}}}\cup\large{\textcircled{\small{5}}}$ can be subdivided by a new boundary $\lambda^*=\delta$. Since $f'(\lambda^*)=0$, we have $f'(\delta)=0$, so the new boundary satisfies
\begin{equation*}
(\gamma-\rho\cos\theta)\delta^2+(r^2-\rho^2)\delta+(\rho r^2\cos\theta-\rho^2\gamma)=0.
\end{equation*}
Substitute the polar coordinate $(\rho,\theta)$ of the weight point $p$ with the Cartesian one $(p_x,p_y)$, then we have $\rho\cos\theta=p_x$ and $\rho^2=p_x^2+p_y^2$. The above equation could be expressed as
\begin{equation}
\label{eq_boundary1}
\quad (\gamma+\delta)(p_x^2+p_y^2) - (r^2-\delta^2)p_x - (\gamma\delta+r^2)\delta = 0,
\end{equation}
which represents a circle passing through $x_i$, $x_j$ and centered at $((r^2-\delta^2)/(2\gamma+2\delta),0)$. $O_2=(-\delta,0)$ also lies on this circle. As shown in Fig. \ref{pic_division_bounded}(a), we use $b_1=0$ to denote this boundary, which is an arc inside region $\large{\textcircled{\small{3}}}\cup\large{\textcircled{\small{5}}}$.  Boundary $b_1=0$ subdivides region $\large{\textcircled{\small{3}}}\cup\large{\textcircled{\small{5}}}$ into two new regions: region $\large{\textcircled{\small{3}}}'$ lies outside the circle and region $\large{\textcircled{\small{5}}}'$ lies inside the circle. If $p$ lies in region $\large{\textcircled{\small{3}}}'$, $\lambda^*<\delta$, thus the optimal solution $x^*$ lies on $E^\circ(x_i,x_j)$. The distance from $x^*$ to $O_1$ equals to $\lambda^*$. If $p$ lies in region $\large{\textcircled{\small{5}}}'$, $\lambda^*>\delta$, thus the optimal solution is $O_2$.

When $p$ lies in region $\large{\textcircled{\small{2}}}\cup\large{\textcircled{\small{4}}}$, $f(\lambda)$ decreases in $(0,\lambda^*)$ and then increases in $(\lambda^*,+\infty)$. Thus if $f(0)>f(\delta)$ the optimal solution of $f(\lambda)$ is 0, and if $f(0)<f(\delta)$ the optimal solution of $f(\lambda)$ is $\delta$ . Therefore region $\large{\textcircled{\small{2}}}\cup\large{\textcircled{\small{4}}}$ could also be subdivided by a new boundary $f(0)=f(\delta)$, whose squared form is,
\begin{equation*}
\frac{\rho^2}{r^2} = \frac{\rho^2+2\rho\delta\cos\theta+\delta^2}{r^2+2\gamma\delta+\delta^2}.
\end{equation*}
Similar to the process of obtaining boundary $b_1=0$, the above equation can be expressed as
\begin{equation}
\label{eq_boundary2}
\quad (2\gamma+\delta)(p_x^2+p_y^2) - 2r^2p_x - r^2\delta = 0,
\end{equation}
which also represents a circle passing through $x_i$ and $x_j$. But the circle center is $(r^2/(2\gamma+\delta),0)$. As shown in Fig. \ref{pic_division_bounded}(a), we use $b_2=0$ to denote the boundary, which is an arc inside region $\large{\textcircled{\small{2}}}\cup\large{\textcircled{\small{4}}}$. Actually boundary $b_2=0$ must lie inside region $\large{\textcircled{\small{4}}}$, because the center lies on the positive x-axis. Boundary $b_2=0$ subdivides region $\large{\textcircled{\small{2}}}\cup\large{\textcircled{\small{4}}}$ into two new regions: region $\large{\textcircled{\small{2}}}'$ lies outside the circle and region $\large{\textcircled{\small{4}}}'$ lies inside the circle. If $p$ lies in region $\large{\textcircled{\small{2}}}'$, $f(0)>f(\delta)$, the optimal solution is $O_1$. If $p$ lies in region $\large{\textcircled{\small{4}}}'$, $f(0)<f(\delta)$, the optimal solution is $O_2$.

As a summary, the changes of the objective function and the corresponding optimal solutions for each region are listed in Table \ref{table_trend_bounded}. From the table we could also see that different regions may correspond to the same optimal solution, the six regions could also be degenerated into three. As shown in Fig. \ref{pic_division_bounded}(b), the boundaries of the three regions are three arcs $c=0\wedge b<0$, $b_1=0$ and $b_2=0$. We also use $E:c=0$, $E:b_1=0$, $E:b_2=0$ to denote the three boundaries.

When $O_1$ lies on segment $x_ix_j$ ($\gamma=0$), boundary $E:c=0$ lies on the line passing through $x_i$ and $x_j$. As shown in Fig. \ref{pic_division_bounded}(c), the plane still could be divided into three regions. The optimal solutions of each region correspond to either one endpoint or the interior of the bounded edge. Actually, the case of unbounded edge could be seen as a special case of bounded one, because boundary $E:b_1=0$ converges to $E:a=0$ and boundary $E:b_2=0$ converges to $E:b=0$ when $\delta\rightarrow \infty$.

\begin{figure}
  \centering
  \includegraphics[width=0.95\linewidth]{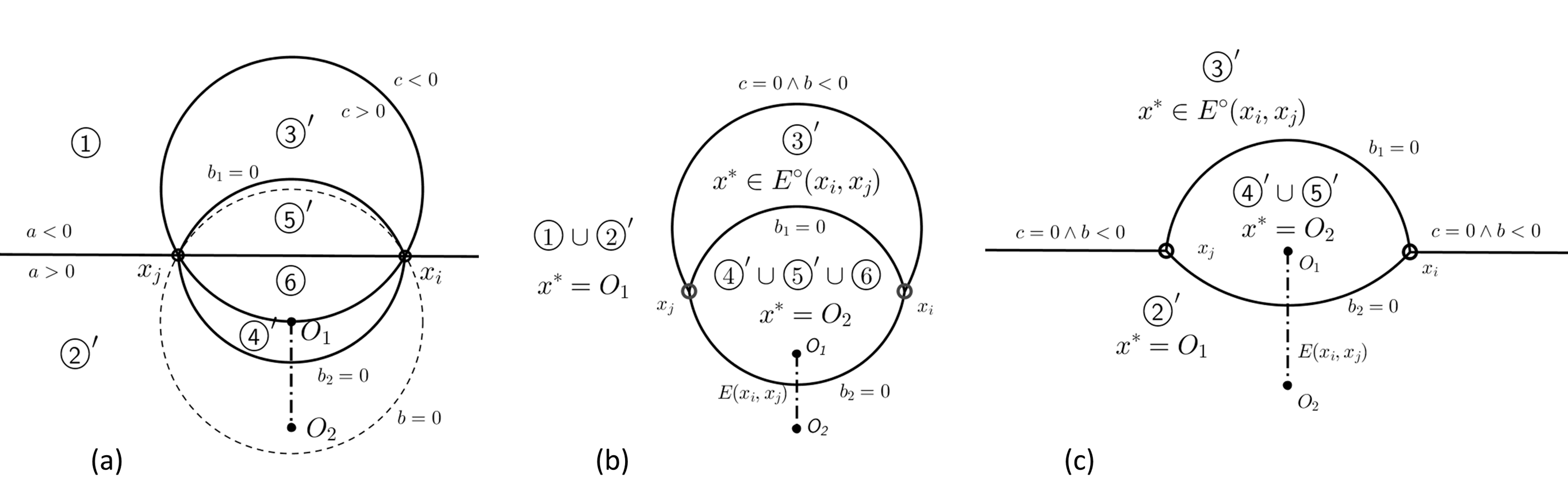}
  \caption{Plane division for a single bounded edge. (a) Original, $\gamma>0$, (b) Degenerate, $\gamma>0$, (c) Degenerate, $\gamma=0$}
  \label{pic_division_bounded}
\end{figure}

\begin{table}
  \caption{Optimal solution on a single bounded edge}
  \label{table_trend_bounded}
  \small
  \renewcommand{\arraystretch}{1.5}
  \centering
  \begin{tabular}{|*{5}{c|}}
  \hline
  Region & $f(\lambda)$ & Value Comparison & Optimal value & Optimal Solution \\ \hline
  $\large{\textcircled{\small{1}}}$   & $(0,\delta)\searrow$  & $f(0)>f(\delta)$ & $f(0)$ & $O_1$   \\ \hline
  $\large{\textcircled{\small{2}}}'$  & $(0,\delta)\searrow$ or $(0,\lambda^*)\searrow(\lambda^*,\delta)\nearrow$ & $f(0)>f(\delta)$ & $f(0)$ & $O_1$ \\ \hline
  $\large{\textcircled{\small{3}}}'$  & $(0,\lambda^*)\nearrow(\lambda^*,\delta)\searrow$ & $f(\lambda^*)>\max(f(0),f(\delta))$ & $f(\lambda^*)$ & $\in E^\circ(x_i,x_j)$ \\ \hline
  $\large{\textcircled{\small{4}}}'$  & $(0,\lambda^*)\searrow(\lambda^*,\delta)\nearrow$ & $f(0)<f(\delta)$ & $f(\delta)$ & $O_2$ \\ \hline
  $\large{\textcircled{\small{5}}}'$  & $(0,\delta)\nearrow$ & $f(0)<f(\delta)$ & $f(\delta)$  &$O_2$ \\ \hline
  $\large{\textcircled{\small{6}}}$   & $(0,\delta)\nearrow$ & $f(0)<f(\delta)$ & $f(\delta)$  &$O_2$ \\ \hline
  \end{tabular}
\end{table}

\subsection{summary}
In the above discussion, no matter whether the edge is bounded or not, and no matter whether the starting node lies on segment $x_ix_j$ or not, the plane could be divided into three regions. When $p$ lies in one of the regions, the optimal solution of subproblem $M_s(W_4)$ either locates at one endpoint of $E(x_i,x_j)$ or lies on $E^\circ(x_i,x_j)$. 
From Table \ref{table_trend_unbounded} and Table \ref{table_trend_bounded}, the objective function value at the optimal solution on $E^\circ(x_i,x_j)$ is larger than that at two endpoints of $E(x_i,x_j)$.
\begin{defn}
We call the region whose corresponding optimal solution lies on $E^\circ(x_i,x_j)$ as the dominant region of $E(x_i,x_j)$. And we call the regions corresponding to two endpoints of $E(x_i,x_j)$ as the adjoint regions of $E(x_i,x_j)$.
\end{defn}
We denote the dominant region of $E(x_i,x_j)$ by $DR(E)$ and denote the two adjoint regions by $AR(O_1)$ and $AR(O_2)$. What should be noticed is the boundary between two adjoint regions. If $E(x_i,x_j)$ is unbounded and $p$ lies on boundary $E:b=0$, we have $f(0)=f(\infty)$, both $O_1$ and $\infty$ are the optimal solution. If $E(x_i,x_j)$ is bounded and $p$ lies on boundary $E:b_2=0$, we have $f(0)=f(\delta)$, both $O_1$ and $O_2$ are the optimal solution. Thus optimal solution on a single edge in $FVB(S)$ may not be unique.

Dominant region plays an important role in calculating the optimal solution on multiple edges. When $E(x_i,x_j)$ is unbounded, $DR(E)$ is an arch if $\gamma>0$ and $DR(E)$ is a half plane if $\gamma=0$. When $E(x_i,x_j)$ is bounded, $DR(E)$ is the difference set of two arches if $\gamma>0$ and $DR(E)$ is the difference set of one half plane and one arch if $\gamma=0$. In order to represent the dominant region of arbitrary edge in a unified form, we give an extended definition of arch region to include the special case in which $\gamma=0$.
\begin{defn}
\label{definition_arch}
$O$ is the starting node of $E(x_i,x_j)$. If $O$ does not lie on segment $x_ix_j$ ($\gamma>0$), segment $x_ix_j$ divides the circle passing through $O,x_i$ and $x_j$ into two arcs. Denote the arc which does not contain point $O$ by $Arc(O,x_i,x_j)$, then  $Arch(O,x_i,x_j)$ is the region surrounded by segment $x_ix_j$ and $Arc(O,x_i,x_j)$. If $O$ lies on segment $x_ix_j$ ($\gamma=0$), the line passing the three points divides the plane into two half planes. Then $Arch(O,x_i,x_j)$ indicates the half plane which does not contain $E(x_i,x_j)$.
\end{defn}
From the definition, we have $Arch(\infty,x_i,x_j)=\emptyset$, thus the dominate region of an arbitrary edge could be represented as,
\begin{equation}
DR(E) = Arch(O_1,x_i,x_j) - Arch(O_2,x_i,x_j).
\end{equation}

\section{Optimal solution on the whole $FVB(S)$}
\label{section_center_function}

After the optimal solutions on each single edge are calculated, the optimal solution on the whole $FVB(S)$ could be obtained by traversing all edges in $FVB(S)$. Assume $CH(S)$ has $m$ distinct vertex, then there are at most $2m-3$ edges in $FVB(S)$ \cite{book2008CG}. Even if the edge containing $\varepsilon(S)$ is divided, the edge number in $FVB(S)$ is no larger than $2m-2$. Thus the time complexity of the above numerical method is $O(m)$ if $FVB(S)$ is known. Both $FVB(S)$ and $\varepsilon(S)$ can be calculated using the randomized incremental algorithm \cite{book2008CG}, whose time complexity is also $O(m)$.

However, several questions still remain unanswered: Is the optimal solution of problem $M(W_4)$ unique? Do all edges in $FVB(S)$ need to be checked? In order to answer the two questions, we present a tree structure constructed from $FVB(S)$. The hierarchical relationship between edges in the tree will help us to locate the optimal solution.
\begin{defn}
Treat $FVB(S)$ as a graph $G=(V,E)$ with the node set $V$ and the edge set $E$. $FVB(S)$ forms a tree-like structure. If $\varepsilon(S)$ locates at one node in $FVB(S)$, we use the node as the root node. If $\varepsilon(S)$ lies on the interior of one edge in $FVB(S)$, we separate the edge into two parts from $\varepsilon(S)$ and treat $\varepsilon(S)$ as the root node. The tree constructed from $FVB(S)$ using $\varepsilon(S)$ as the root node is called the division tree. 
\end{defn}
Denote the division tree associated with point set $S$ by $DV(S)$, and denote the depth of $DV(S)$ by $D$. The leaf node of $DV(S)$ locates at $\infty$. Edges in $DV(S)$ connect one node with its child node, thus we define the edge depth as the depth of the connected child node. All edges and nodes with depth less than or equal to $d$ also form a tree structure $(d\leq D)$, which is denoted by $DV_d(S)$. If one leaf node of $DV_d(S)$ does not locate at $\infty$, we extend the edge connecting the leaf node in $DV_d(S)$ to infinity along the original direction of the edge, and denote the extended tree by $DV_d'(S)$. Then $DV_d'(S)$ is the division tree associated with a subset of $S$, which contains all static points involved in all edges in $DV_d'(S)$. Denote the subset by $S_d$, then we have $DV_d'(S)=DV(S_d)$.

\begin{figure}
  \centering
  \includegraphics[width=1\linewidth]{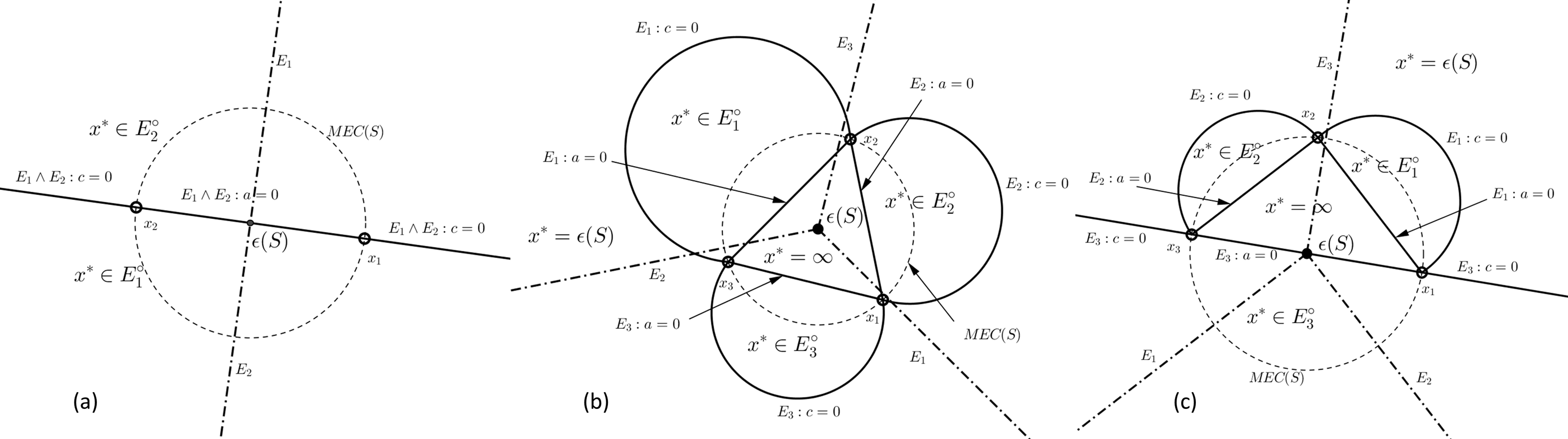}\\
  \caption{Plane division and corresponding optimal solution of the three basic types of $DV_1'(S)$. (a) Two points on a diameter, (b) Three points on an acute-angled triangle, (c) Three points on a right-angled triangle.}
  \label{pic_os_depth1}
\end{figure}

\begin{Lemma}
\label{lemma_start}
The optimal solution of problem $M(W_4)$ on $DV(S_1)$ is unique. The plane could be divided into several non-overlapping regions. When the weight point lies in one of the regions, the optimal solution either locates at one node or lies on the interior of one edge.
\end{Lemma}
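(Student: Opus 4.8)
The plan is to reduce the statement to the single-edge analysis already carried out in Section~\ref{section_os_single_edge}, exploiting the fact that $DV(S_1)$ is tiny. First I would record the shape of $DV(S_1)=DV_1'(S)$: by the three basic types of Fig.~\ref{pic_os_depth1}, $S_1$ is either two points (case~(a)), an acute triangle (case~(b)) or a right triangle (case~(c)), and in each case $DV(S_1)$ is a \emph{star}: the root $\varepsilon(S_1)$ is joined by two or three edges directly to the single leaf $\infty$, and every non-root node equals $\infty$. Consequently, on $FVB(S_1)$ each edge has endpoints $\varepsilon(S_1)$ and $\infty$, with $f(\infty)=1$ and $f(\varepsilon(S_1))=\|\varepsilon(S_1)-p\|/r(S_1)$ (using Fact~\ref{fact_mec_center} for $\varepsilon(S_1)\in FVB(S_1)$); hence $f(\varepsilon(S_1))>f(\infty)$, $=f(\infty)$, or $<f(\infty)$ exactly according to whether $p$ lies outside, on, or inside $MEC(S_1)$. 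Case~(c) mixes one edge with $\gamma=0$ (the bisector of the hypotenuse, since $\varepsilon(S_1)$ is the hypotenuse midpoint) with edges having $\gamma>0$, and case~(a) has two edges with $\gamma=0$; all are covered by Section~\ref{section_os_single_edge}.

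Next I would invoke, for each of the at most three edges $E$, the trichotomy summarized after Table~\ref{table_trend_unbounded}: the plane is partitioned into the dominant region $DR(E)$, on which the maximum of $f|_E$ is a unique interior point $x_E^\ast\in E^\circ$ with $f(x_E^\ast)>\max\{f(\varepsilon(S_1)),f(\infty)\}$, and two adjoint regions, on which the maximum of $f|_E$ is attained at the node $\varepsilon(S_1)$, respectively at the node $\infty$. Two elementary but essential observations are needed here. First, the dominant regions of the different edges are pairwise disjoint: in case~(a) they are the two open half-planes bordered by the line $x_1x_2$, while in cases~(b),(c) one has $DR(E(x_i,x_j))=Arch(\varepsilon(S_1),x_i,x_j)$ (the subtracted term $Arch(\infty,\cdot)$ being empty by Definition~\ref{definition_arch}), which lies strictly on the side of line $x_ix_j$ not containing $\varepsilon(S_1)$; using the explicit circle $c=0$ through $\varepsilon(S_1),x_i,x_j$ one checks that two such arches attached to two sides sharing a convex-hull vertex do not overlap. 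Second, an adjoint region on which $f|_E$ peaks at $\varepsilon(S_1)$ forces $f(\varepsilon(S_1))\ge f(\infty)$, i.e.\ it lies in the closed exterior of $MEC(S_1)$; symmetrically the adjoint region peaking at $\infty$ lies in the closed disk $MEC(S_1)$.

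With these in hand the global optimum on $DV(S_1)$ is assembled as follows. Fix $p$ not coinciding with a vertex of $CH(S_1)$. If $p\in DR(E)$ for some edge $E$ --- necessarily unique by disjointness --- then for every other edge $E'$ we have $p\notin DR(E')$, so the maximum of $f|_{E'}$ is a node value $\le\max\{f(\varepsilon(S_1)),f(\infty)\}<f(x_E^\ast)$; hence the global optimum is the unique point $x_E^\ast\in E^\circ$. If $p$ lies in no dominant region, then on every edge the maximum of $f|_E$ is attained at a node, and by the containment observation $p$ lies in the $\varepsilon(S_1)$-adjoint region of every edge when $p$ is outside $MEC(S_1)$ and in the $\infty$-adjoint region of every edge when $p$ is inside $MEC(S_1)$; therefore the global optimum is the node $\varepsilon(S_1)$ in the former case and the node $\infty$ in the latter, and it is unique because the monotonicity rows of Table~\ref{table_trend_unbounded} keep every interior edge value strictly below the winning node value. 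The only residual ambiguity is the curve $\partial MEC(S_1)$ minus the arches, where $f(\varepsilon(S_1))=f(\infty)=1$ is attained at both nodes; this measure-zero set is absorbed into an adjacent region (the regions being taken relatively half-open), so uniqueness holds throughout the interior of each region.

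Putting it together, the plane is divided into the at most three dominant regions --- in each, the optimal solution is the interior point $x_E^\ast$ of one edge --- together with the part of $\mathbb{R}^2\setminus\bigcup_E DR(E)$ lying outside $MEC(S_1)$, where the optimal solution is the node $\varepsilon(S_1)$, and the part lying inside, where the optimal solution is the node $\infty$; this is exactly the asserted division, and the optimal solution in each region is the claimed node or edge-interior point. I expect the disjointness-of-arches check (the first observation above) to be the main obstacle --- it is the only point requiring a genuine, if short, plane-geometry argument --- with the boundary/tie bookkeeping on $\partial MEC(S_1)$ a minor secondary nuisance.
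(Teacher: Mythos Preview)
Your proposal is correct and follows essentially the same route as the paper: enumerate the three possible shapes of $S_1$, apply the single-edge trichotomy of Section~\ref{section_os_single_edge} to each unbounded edge emanating from $\varepsilon(S_1)$, and combine. The paper's proof is terser---it simply names the five regions in the acute case and asserts the rest---whereas you supply the two supporting observations (pairwise disjointness of the dominant arches, and $AR(\varepsilon(S_1))\subset\{\rho\ge r\}$, $AR(\infty)\subset\{\rho\le r\}$) that make the combination rigorous.

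One sharpening: your ``residual ambiguity'' on $\partial MEC(S_1)\setminus\bigcup_E DR(E)$ is in fact empty except for the vertices $x_i$. In the acute case each circumcircle arc from $x_i$ to $x_j$ (not through $x_k$) lies strictly inside $Arch(\varepsilon(S_1),x_i,x_j)$, because the boundary circle $c=0$ through $\varepsilon(S_1),x_i,x_j$ has its far arc outside the circumcircle; the right-triangle and two-point cases are similar. Hence any $p\in\partial MEC(S_1)$ that is not a vertex already sits in the interior of some $DR(E)$ and gets a unique interior optimum, so no half-open bookkeeping is needed---this is precisely the paper's remark that the only boundary intersection points are the $x_i$, which are excluded by hypothesis.
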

\begin{proof}
If the depth of the division tree is 1, all static points in $S$ lie on $MEC(S)$. Since $S$ is in general position, there are at most three static points in $S$. Thus the point configuration of $S$ may only have three basic types: (1) two points $x_1,x_2$ on a diameter, (2) three points $x_1,x_2,x_3$ form an acute-angled triangular, and (3) three points $x_1,x_2,x_3$ form a right-angled triangle. As shown in Fig. \ref{pic_os_depth1}, in the first type $\varepsilon(S)$ lies on segment $x_ix_j$, in the second type $\varepsilon(S)$ lies inside $\triangle x_1x_2x_3$, and in the third type $\varepsilon(S)$ lies on the right-angle side of $\triangle x_1x_2x_3$. All edges starting from the root node $\epsilon(S)$ are unbounded.

Take the second type as the example. The plane is divided into five non-overlapping regions: the dominant regions of three edges $DR(E_1)$, $DR(E_2)$ and $DR(E_3)$ correspond to the interiors of the three edges $E_1^\circ$, $E_2^\circ$ and $E_3^\circ$ respectively, $\triangle x_1x_2x_3$ corresponds to $\infty$, and the rest region corresponds to $\varepsilon(S)$. Because $x_1$, $x_2$ and $x_3$ are the possible intersection points of the boundaries between any two adjacent regions and the weight point $p$ is required to not coincide with the three static points in problem $M(W_4)$, the optimal solution is unique. It is easy to verify that the conclusions hold true for the other two basic types. 
\end{proof}

\begin{figure}
  \centering
  \includegraphics[width=0.7\linewidth]{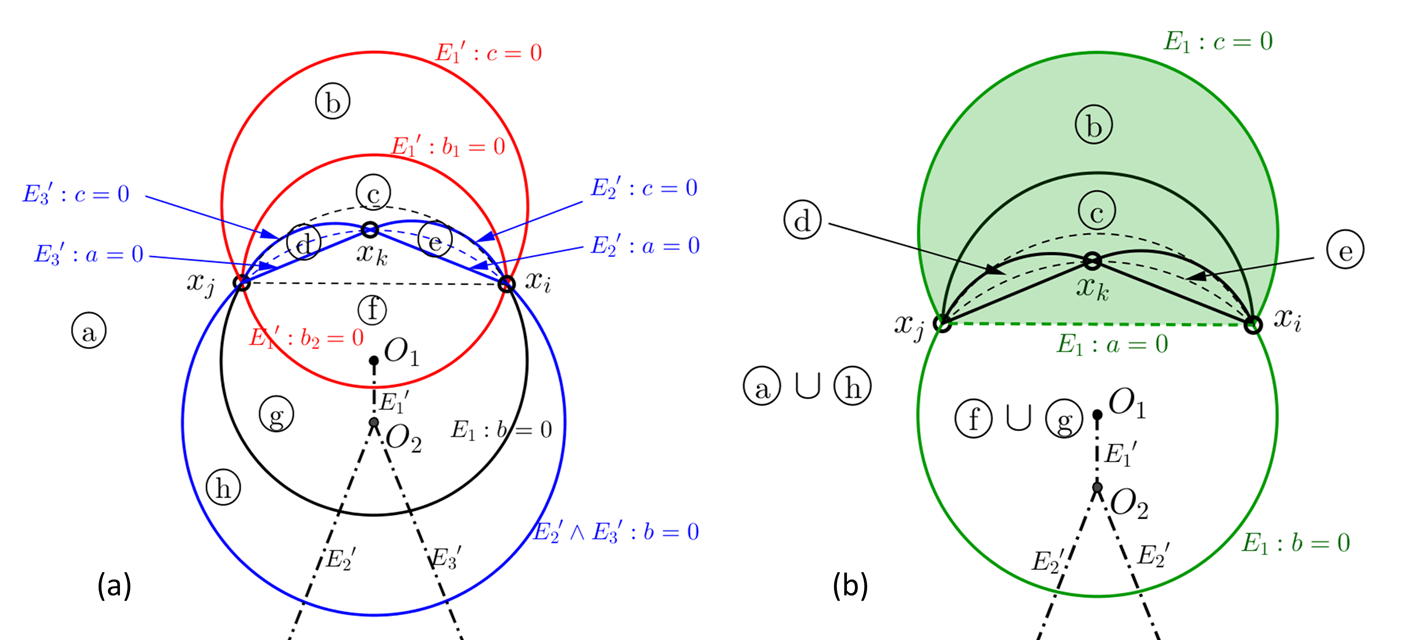}
  \caption{Plane division for a subtree in $DV_{d+1}'(S)$. (a) Original, $\gamma>0$, (b) Degenerate, $\gamma>0$.}
  \label{pic_cf_second_primary_case}
\end{figure}

\begin{Lemma}
\label{lemma_deductive}
Assume the conclusions of Lemma \ref{lemma_start} hold true for $DV(S_d)$ $(0\leq d < D-1)$. In order to obtain the plane division for $DV(S_{d+1})$ and the corresponding optimal solution of each region, we check all unbounded edges in $DV_d'(S)$ with depth equal to $d$. If an unbounded edge $E_1(x_i,x_j)$ in $DV(S_d)$ is split into two unbounded edges $E'_2(x_i,x_k)$ and $E'_3(x_k,x_j)$ at node $O_2$ in $DV(S_{d+1})$, we subdivide $Arch(O_1,x_i,x_j)$, the dominant region of $E_1$, into non-overlapping subregions. The corresponding unique optimal solution of each subregion is,
\begin{equation}
\left\{
\begin{array}{lcl}
x^*=\infty,          & & p \in \triangle x_ix_jx_k  \\
x^*\in E_2'^\circ(x_i,x_k),  & & p \in Arch(O_2,x_i,x_k)  \\
x^*\in E_3'^\circ(x_k,x_j),  & & p \in Arch(O_2,x_k,x_j)  \\
x^*\in E_1'^\circ(x_i,x_j) & & p \in Arch(O_1,x_i,x_j)-Arch(O_2,x_i,x_j) \\
x^*=O_2,             & & p \in remaining\ region\ in\ Arch(O_1,x_i,x_j)
\end{array}\right..
\label{eq_centerfunction_altered}
\end{equation}
And the plane division and corresponding unique optimal solutions for other regions remain unchanged.
\end{Lemma}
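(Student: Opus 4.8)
The statement is the inductive step, so the plan is to compare $DV(S_d)=DV_d'(S)$ with $DV(S_{d+1})=DV_{d+1}'(S)$ edge by edge, localise every change to a single dominant region, and then read the refined division off the single‑edge analysis of Section~\ref{section_os_single_edge}. By Facts~\ref{fact_general_position} and~\ref{fact_adjointcircle}, the only structural difference is the one named in the statement: each depth‑$d$ unbounded edge $E_1(x_i,x_j)$ of $DV_d'(S)$ — there the ray from $O_1$ continued straight past the hidden node $O_2$ — is in $DV_{d+1}'(S)$ un‑extended to the bounded edge $E_1'(x_i,x_j)$ from $O_1$ to $O_2$ plus the two unbounded edges $E_2'(x_i,x_k),E_3'(x_k,x_j)$ out of $O_2=\varepsilon(\{x_i,x_j,x_k\})$, all other edges untouched. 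Since the objective of $M(W_4)$ is a maximum over edges and, by the summary of Section~\ref{section_os_single_edge} together with $Arch(\infty,x_i,x_j)=\emptyset$, one has $DR(E_1)=Arch(O_1,x_i,x_j)$, it is enough to treat one such replacement; and by the induction hypothesis the region $Arch(O_1,x_i,x_j)$ attached to each depth‑$d$ edge is precisely the ``edge‑interior'' cell of the $DV(S_d)$‑division carrying that edge, so these regions are pairwise disjoint and the replacements do not interfere. I will also use throughout that every $Arch$ region is the intersection of a half‑plane with a closed disk (the side of the chord away from the starting node, intersected with the disk through that node and the chord endpoints), hence convex.

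\textbf{Confinement.} First I would show that for $p\notin Arch(O_1,x_i,x_j)$ nothing changes. We have $\sup_{E_1'}f\le\sup_{E_1}f$ because $E_1'\subset E_1$ and the farthest‑site distance (hence $f$) agrees on the two. Using the nesting $Arch(O_2,x_i,x_k)\cup Arch(O_2,x_k,x_j)\subseteq Arch(O_1,x_i,x_j)$ (below), $p\notin Arch(O_1,x_i,x_j)$ lies outside $DR(E_2')=Arch(O_2,x_i,x_k)$ and $DR(E_3')=Arch(O_2,x_k,x_j)$, so by Tables~\ref{table_trend_unbounded}--\ref{table_trend_bounded} the suprema of $f$ along $E_2',E_3'$ are attained at $O_2$ or $\infty$; since $O_2$ is equidistant from $x_i,x_j,x_k$, $f(O_2)$ has the same value for $S_d$ and $S_{d+1}$ and $O_2\in E_1$, and a short inspection of the monotonicity columns of Table~\ref{table_trend_unbounded} (using $f(0)=\rho/r$, $f(\infty)=1$) gives $f(O_2)\le\max\{f(O_1),1\}$ whenever $p\notin DR(E_1)$. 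Hence $\sup_{E_1'\cup E_2'\cup E_3'}f\le\sup_{E_1}f=\max\{f(O_1),1\}$, attained at the node $O_1$ or at $\infty$, both shared with surviving edges; so the overall optimum and its cell are unchanged, and — because the inequalities $f(O_2)<f(O_1)$ and $f(O_1)>1$ (resp.\ $f(O_2)<1$) are strict on $AR(O_1)$ (resp.\ $AR(\infty)$) for $p$ not a static point — the optimum remains unique.

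\textbf{The division inside $Arch(O_1,x_i,x_j)$.} Here I would apply Section~\ref{section_os_single_edge} to each of $E_1',E_2',E_3'$: its summary gives $DR(E_1')=Arch(O_1,x_i,x_j)-Arch(O_2,x_i,x_j)$, $DR(E_2')=Arch(O_2,x_i,x_k)$, $DR(E_3')=Arch(O_2,x_k,x_j)$, matching lines four, two, three of~\eqref{eq_centerfunction_altered}. It then remains to check: (i) these three regions are pairwise disjoint subsets of $Arch(O_1,x_i,x_j)$, that $\triangle x_ix_jx_k\subseteq Arch(O_2,x_i,x_j)\subseteq Arch(O_1,x_i,x_j)$, and that with the left‑over part of $Arch(O_1,x_i,x_j)$ they partition it; (ii) inside $DR(E_m')$ the interior maximiser on $E_m'$ beats $f$ everywhere else on $DV(S_{d+1})$ — on the two sibling edges by Tables~\ref{table_trend_unbounded}--\ref{table_trend_bounded} ($f$ there peaks only at $O_2$ or $\infty$, with values $\le f(O_2)$, below that interior maximum), and on the old edges because for such $p$ the $DV(S_d)$‑optimum already sat on $E_1^\circ$ with value $f(\lambda^*)$ strictly above every other $DV(S_d)$‑edge, and when the critical abscissa $\lambda^*$ of~\eqref{eq_lambda} satisfies $\lambda^*\le\delta$ (equivalently, by~\eqref{eq_boundary1}, $p\in Arch(O_1,x_i,x_j)-Arch(O_2,x_i,x_j)$) that peak survives on $E_1'$, so the optimum of $DV(S_{d+1})$ is the interior maximiser on $E_1'$; (iii) on $\triangle x_ix_jx_k$ the value $f(\infty)=1$ wins and on the left‑over region $f(O_2)$ wins, read again from the three tables; and (iv) uniqueness, which as in Lemma~\ref{lemma_start} follows because the new sub‑region boundaries are the arcs and segments carried by $\{x_i,x_j,x_k,O_1,O_2\}$, across each of which the two flanking optima coincide (a critical point sliding onto the shared node), while their triple intersections occur only at the static points, which $p$ may not equal.

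\textbf{Main obstacle.} The crux is part~(i) together with the remaining sub‑case of~(ii): $p\in Arch(O_2,x_i,x_j)$, where $\lambda^*>\delta$ so the $DV(S_d)$‑maximiser lies on the portion of $E_1$ beyond $O_2$, which after adding $x_k$ falls into the \emph{interior} of the cell $FV_{S_{d+1}}(x_k)$; one must then show, by a displacement towards $x_k$ in the spirit of the proof of Theorem~\ref{theorem_location}, that the optimum of $DV(S_{d+1})$ moves onto $\partial FV_{S_{d+1}}(x_k)$, i.e.\ onto $E_2'$, $E_3'$ or $O_2$, with value at least that carried by every old edge. The geometric inclusions behind all of this reduce to two elementary facts — that two circles through a common chord are nested on each side of that chord (whence $\gamma_1<\gamma_2$ gives $Arch(O_2,x_i,x_j)\subseteq Arch(O_1,x_i,x_j)$), and that $x_k$ lies on the circumcircle of $x_ix_jx_k$ on the side of the line $x_ix_j$ opposite to $E_1$ (so $\triangle x_ix_jx_k$ lies on the $Arch$ side and $\{x_i,x_j,x_k\}$ is obtuse at $x_k$) — but assembling them, keeping track of the degenerate $\gamma=0$ configurations of Figs.~\ref{pic_division_unbounded}(c) and~\ref{pic_division_bounded}(c), and matching each sub‑region with the correct column of Tables~\ref{table_trend_unbounded} and~\ref{table_trend_bounded}, is the part that needs genuine care; once the geometry is pinned down, the value comparisons are bookkeeping on the monotonicity patterns of Section~\ref{section_os_single_edge}.
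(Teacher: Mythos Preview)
Your proposal follows essentially the same approach as the paper: both localise the inductive step to a single subtree $E_1'\cup E_2'\cup E_3'$ replacing an unbounded edge $E_1$, and both read the refined division off the single-edge analysis of Section~\ref{section_os_single_edge}. The main stylistic difference is in the bookkeeping. The paper overlays two separate plane divisions---the three-region division for the bounded edge $E_1'$ (Fig.~\ref{pic_division_bounded}) and the five-region division for the pair $E_2'\cup E_3'$ (the second basic type of Lemma~\ref{lemma_start})---obtaining eight regions, tabulates the subtree optimum in each (Table~\ref{table_subtree}), degenerates to six, and then compares this pictorially with $E_1$'s three-region division to see that only $Arch(O_1,x_i,x_j)$ is refined. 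You instead apply the $DR$ formula from the summary of Section~\ref{section_os_single_edge} directly to each of $E_1',E_2',E_3'$ and verify the inclusions and value comparisons piece by piece. Both routes rest on the same single-edge tables; the paper's is more pictorial, yours more explicit.

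Your ``main obstacle'' paragraph correctly flags the two things the paper treats as visually evident from Fig.~\ref{pic_cf_second_primary_case}: the nesting $Arch(O_2,x_i,x_k)\cup Arch(O_2,x_k,x_j)\cup\triangle x_ix_jx_k\subseteq Arch(O_2,x_i,x_j)\subseteq Arch(O_1,x_i,x_j)$, and the fact that when the old $E_1$-maximiser falls beyond $O_2$ the new subtree optimum still dominates every common edge. The paper does not spell out either point---it simply reads them off the overlay figure and Table~\ref{table_subtree}---so your plan to pin these down via the circumcircle geometry and a displacement argument in the spirit of Theorem~\ref{theorem_location} is a genuine tightening rather than a deviation. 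The displacement idea is sound but note that it gives you a point on $E_2'\cup E_3'\cup\{O_2\}$ whose $S_{d+1}$-objective exceeds the $S_{d+1}$-objective (not the $S_d$-objective) at the old maximiser; to close the loop you still need that every common-edge value is bounded by $f(O_2)$, which follows once you observe that on $E_1$ the function increases from $O_1$ to $\lambda^*$, so $f(O_2)>f(O_1)$, and then trace the tree inductively from $O_1$ back to the root using the single-edge monotonicity of Table~\ref{table_trend_bounded}.
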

\begin{proof}
The difference between $DV(S_d)$ an $DV(S_{d+1})$ is that one unbounded edge with depth $d$ in $DV(S_d)$ may be replaced by a subtree containing a bounded edge with depth $d$ and two unbounded edges with depth $d+1$ in $DV(S_{d+1})$.

We investigate the plane division and the corresponding optimal solution on one of the subtrees in $DV(S_{d+1})$. As shown in Fig. \ref{pic_cf_second_primary_case}(a), the subtree contains three edges: $E_1'(x_i,x_j)$ is bounded, $E_2'(x_i,x_k)$ and $E_3'(x_k,x_j)$ are unbounded. $O_2$ is the intersection point of three edges and $O_1$ is the parent node of $O_2$. $O_1$ is not on segment $x_ix_j$ ($\gamma>0$). From Fact \ref{fact_adjointcircle}, $x_i$, $x_j$ and $x_k$ should lie on a circle centered at $O_2$.  We obtain the plane division associated with $E_1'$ and $E_2'\cup E_3'$ separately and then calculate the corresponding optimal solutions of each new region. The plane division associated with the bounded edge $E_1'$ is shown with the red solid line. The plane division associated with two unbounded edges $E_2'\cup E_3'$ is shown with the blue solid line. The above two groups of boundaries together with $E_1':b=0$ divide the plane into eight regions. The corresponding optimal solutions of each region are listed in Table \ref{table_subtree}. From the table we could see that different regions may correspond to the same optimal solution, thus the above eight regions could be degenerated into six, which is shown in Fig. \ref{pic_cf_second_primary_case}(b). On the same image, the plane division associated with the unbounded edge $E_1$ in $DV(S_d)$ is shown with the green lines. Through comparison of the two plane divisions we have the following conclusion: In order to obtain the plane division associated with the subtree, the dominant region of the unbounded edge $E_1$, namely $Arch(O_1,x_i,x_j)$, needs to be subdivided according to Eq. \eqref{eq_centerfunction_altered} and the two adjoint regions of $E_1$ remain unchanged. When $O_1$ is on segment $x_ix_j$ ($\gamma=0$), the above conclusion is still true with the extended definition of the arch region.

As $Arch(O_1,x_i,x_j)$ is also one of the non-overlapping regions in the plane division associated with $DV(S_d)$, the regions to be subdivided do not overlap during the check of different unbounded edges in $DV(S_d)$ with depth $d$. After all unbounded edges are dealt with, the plane division for $DV(S_{d+1})$ is obtained. The unique optimal solution corresponding to the new subregions can be calculated according to Eq. \eqref{eq_centerfunction_altered} and the unique optimal solution corresponding to other regions remains unchanged.

\end{proof}
\begin{table}
  \caption{Optimal solution on one subtree in $DV_{d+1}'(S)$}
  \label{table_subtree}
  \small
  \renewcommand{\arraystretch}{1.5}
  \centering
  \begin{tabular}{|*{5}{c|}}
  \hline
  Region  &  $x^*_{E_1'}$ &   $x^*_{E_2'\cup E_3'}$ & Value comparison  &  $x^*_{ST}$           \\ \hline
  $\large{\textcircled{\small{a}}}$   & $O_1$         & $O_2$             & $f(O_1)>f(O_2)$   & $O_1$                  \\ \hline
  $\large{\textcircled{\small{b}}}$   & $\in E_1'^\circ$     & $O_2$             & $f(x^*_{E_1'})>f(O_2)$   & $\in E_1'^\circ$           \\ \hline
  $\large{\textcircled{\small{c}}}$   & $O_2$         & $O_2$             & -                 & $O_2$                 \\ \hline
  $\large{\textcircled{\small{d}}}$   & $O_2$         & $\in E_3'^\circ$   & $f(x^*_{E_2'\cup E_3'})>f(O_2)$   & $\in E_3'^\circ$      \\ \hline
  $\large{\textcircled{\small{e}}}$   & $O_2$         & $\in E_2'^\circ$   & $f(x^*_{E_2'\cup E_3'})>f(O_2)$   & $\in E_2'^\circ$       \\ \hline
  $\large{\textcircled{\small{f}}}$   & $O_2$         & $\infty$          & $f(\infty)>f(O_2)$& $\infty$                \\ \hline
  $\large{\textcircled{\small{g}}}$   & $O_1$         & $\infty$          & $f(\infty)>f(O_1)$& $\infty$               \\ \hline
  $\large{\textcircled{\small{h}}}$   & $O_1$         & $\infty$          & $f(O_1)>f(\infty)$& $O_1$                   \\ \hline
  \end{tabular}
\end{table}

According to Lemma \ref{lemma_start} and Lemma \ref{lemma_deductive}, we have the following conclusions through induction.
\begin{Theorem}
\label{theorem_arbitrary_case}
The optimal solution of problem $M(W_4)$ on arbitrary $DV(S)$ is unique. The plane could be divided into non-overlapping regions, when the weight point lies in one of the regions, the optimal solution either locates at one node or lies on the interior of one edge in $DV(S)$.
\end{Theorem}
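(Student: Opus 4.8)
\emph{Strategy.} The plan is to prove the two assertions --- uniqueness of the optimizer of $M(W_4)$, and the existence of a finite, non-overlapping plane decomposition whose regions each carry one optimal solution that is a node or an edge interior of $DV(S)$ --- simultaneously, by induction on the depth $D$ of $DV(S)$. The induction hypothesis at level $d$ is precisely ``the conclusions of Lemma \ref{lemma_start} hold for $DV(S_d)$'', where $DV(S_d)=DV_d'(S)$ is the division tree obtained from the depth-$\le d$ truncation of $DV(S)$ by extending each of its non-$\infty$ leaves to infinity along the incoming edge. Since at depth $D$ every leaf of $DV(S)$ already lies at $\infty$, no extension occurs and $DV(S_D)=DV(S)$; hence the level-$D$ instance of the hypothesis is exactly the theorem.

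\emph{Base case $d=1$.} This is Lemma \ref{lemma_start}: because $S$ is in general position, $S_1$ is one of the three configurations (antipodal pair, acute triangle, right triangle), and in each the explicit plane decomposition has a unique optimizer per region, with the only potential ties located at vertices of $CH(S_1)$, which $p$ is forbidden to coincide with in $M(W_4)$.

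\emph{Inductive step.} Assume the hypothesis at level $d$, with a valid decomposition $\{R_\alpha\}$ of the plane for $DV(S_d)$. Passing to $DV(S_{d+1})$ alters $DV(S_d)$ only locally: each depth-$d$ unbounded edge $E_1(x_i,x_j)$ of $DV(S_d)$ either terminates at $\infty$ --- and is then unchanged --- or, by Fact \ref{fact_general_position}, terminates at a degree-three node $O_2$ and is replaced in $DV(S_{d+1})$ by the subtree of one bounded edge $E_1'(x_i,x_j)$ of depth $d$ together with two unbounded edges $E_2'(x_i,x_k)$, $E_3'(x_k,x_j)$ of depth $d+1$. By Lemma \ref{lemma_deductive}, this replacement changes only the single region of $\{R_\alpha\}$ equal to the dominant region $DR(E_1)=Arch(O_1,x_i,x_j)$ of $E_1$: that region is subdivided, per Eq. \eqref{eq_centerfunction_altered}, into finitely many subregions whose unique optimizers are $\infty$, the interior of $E_2'$, the interior of $E_3'$, the interior of $E_1'$, or $O_2$ --- each a node or an edge interior of $DV(S_{d+1})$ --- while every other region of $\{R_\alpha\}$ (in particular the adjoint regions $AR(O_1)$, $AR(O_2)$ of $E_1$ and every region attached to the remaining edges) retains its shape and its unique optimizer. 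Because the regions $Arch(O_1,x_i,x_j)$ belonging to distinct depth-$d$ unbounded edges are themselves distinct, pairwise non-overlapping members of $\{R_\alpha\}$, the several local refinements act on disjoint parts of the plane and may be carried out simultaneously; the outcome is a finite non-overlapping decomposition valid for $DV(S_{d+1})$, with a unique optimizer on each region. Iterating this from $d=1$ up to $d=D$ yields the decomposition for $DV(S_D)=DV(S)$ and proves the theorem.

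\emph{Expected main obstacle.} The only non-routine point is preserving uniqueness through each subdivision: one must confirm that the internal boundaries produced by Eq. \eqref{eq_centerfunction_altered} never bound a positive-area ``tie'' region, and that the sole candidate tie points are $x_i$, $x_j$, $x_k$, which $p$ avoids by hypothesis. That verification, however, is exactly the bookkeeping already performed inside the proofs of Lemma \ref{lemma_start} and Lemma \ref{lemma_deductive}, so at the level of the theorem it suffices to invoke those two lemmas and run the induction carefully; no additional geometry is required.
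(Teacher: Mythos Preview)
Your proposal is correct and is exactly the approach taken in the paper: the paper's proof is the single sentence ``According to Lemma \ref{lemma_start} and Lemma \ref{lemma_deductive}, we have the following conclusions through induction,'' and you have simply spelled out that induction --- base case $d=1$ via Lemma \ref{lemma_start}, inductive step $d\to d+1$ via Lemma \ref{lemma_deductive}, terminating at $d=D$ where $DV(S_D)=DV(S)$. Your identification of the only delicate point (uniqueness across the new subdivision boundaries, handled inside the two lemmas) is also accurate.
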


Now that the optimal solution of problem $M(W_4)$ is unique, we could describe the relationship between the weight point and the optimal solution with the concept of center function.
\begin{defn}
Function $\varphi:p\rightarrow x^*$ maps the weight point $p$ to the optimal solution $x^*$ of problem $M(W_4)$. Since $x^*$ is also the center of the minimum enclosing circle with a dynamic weight, we call $\varphi:p\rightarrow x^*$ as the center function and denote the center function by $\varphi(p)$.
\end{defn}
Correspondingly, we could define the inverse image of the center function.
\begin{defn}
$q$ is a point in $DV(S)$ and $Q$ is a subset of $DV(S)$. Then both $\varphi^{-1}(q)$ and $\varphi^{-1}(Q)$ are subsets of the Euclidean plane. If $p$ belongs to $\varphi^{-1}(q)$, the optimal solution of problem $M(W_4)$ locates at $x^*$. If $p$ belongs to $\varphi^{-1}_E(Q)$, the optimal solution of problem $M(W_4)$ belongs to set $Q$.
\end{defn}
In problem $M(W_4)$, the weight point $p$ should not coincide with any vertex of $CH(S)$, thus the domain of $\varphi(p)$ is the Euclidean plane except for the vertex of $CH(S)$. From Theorem \ref{theorem_location}, $FVB(S)$ is the range of $\varphi(p)$. When the weight point lies in $CH(S)$, the optimal solution is $\infty$, thus $\varphi^{-1}(\infty)=CH(S)-S$. When the weight point lies in the dominant region of one edge $E$, the optimal solution lies on $E^\circ$, thus we have $\varphi^{-1}(E^\circ)=DR(E)$. The distance from the optimal solution $x^*$ to the connected parent node is $\lambda^*$, which could be calculated according to Eq. \eqref{eq_lambda} with the local coordinate system shown in Fig. \ref{pic_unbounded_edge}. The inverse image of node $O$ depends on whether the node is the root node or not. If $O=\varepsilon(S)$, $\varphi^{-1}(O)$ can be obtained according to the three basic types in Lemma \ref{lemma_start} which are shown in Fig. \ref{pic_os_depth1}. If $O\neq\varepsilon(S)$, we assume node $O$ is the intersection of three edges $E_1(x_i,x_j)$, $E_2(x_i,x_k)$ and $E_3(x_j,x_k)$, then $\varphi^{-1}(O)$ is the region surrounded by the arcs corresponding to the three edges, namely $Arc(O,x_i,x_j)$, $Arc(O,x_i,x_k)$ and $Arc(O,x_j,x_k)$. Fig. \ref{pic_cf_twonodes} shows the center function of two cases in which there are two nodes in $FVB(S)$. In the first case, $\varepsilon(S)$ locates at one node in $FVB(S)$. In the second case, $\varepsilon(S)$ lies on the interior of one edge in $FVB(S)$.
\begin{figure}
    \centering
    \includegraphics[width=0.85\linewidth]{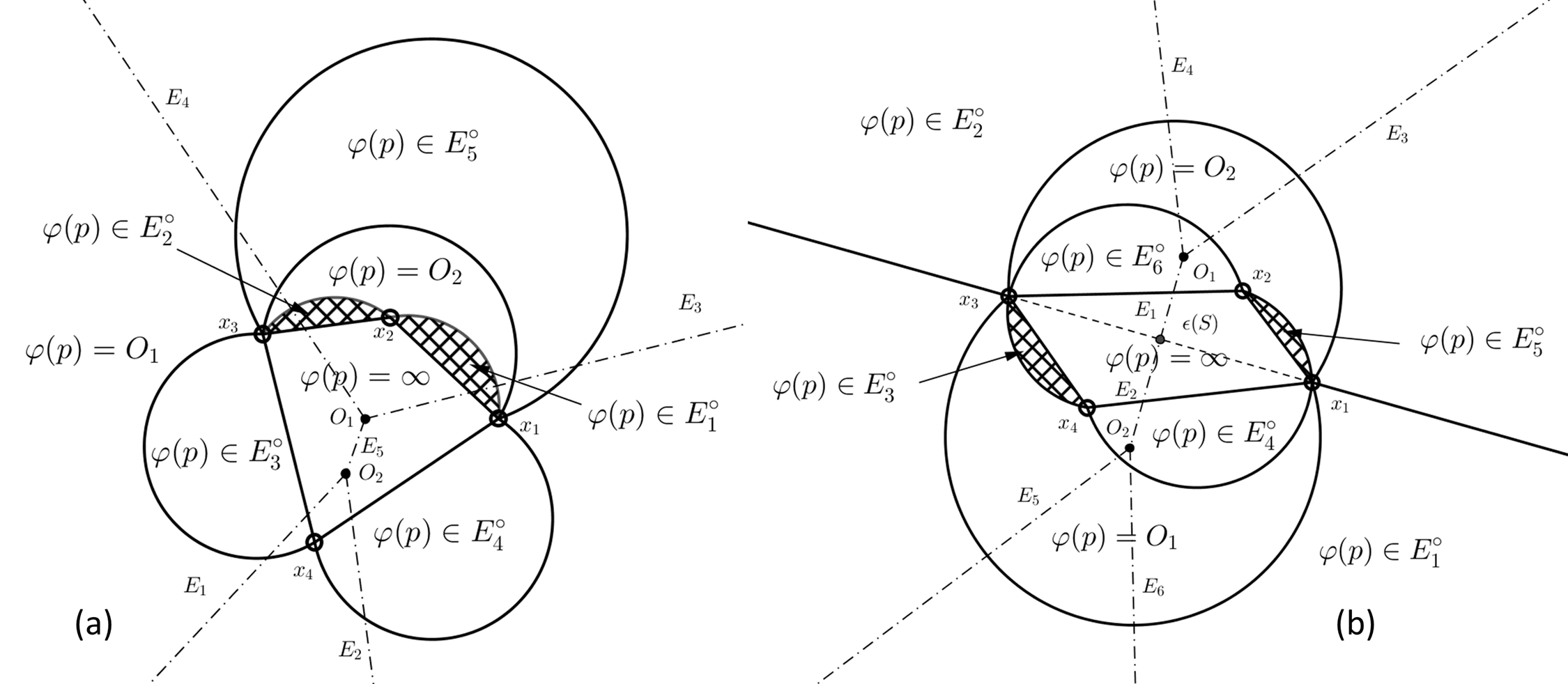}
    \caption{Center function of two cases in which there are two nodes in $FVB(S)$. (a) $\varepsilon(S)$ locates at node $O_1$, (b) $\varepsilon(S)$ lies on the interior of $O_1O_2$.}
    \label{pic_cf_twonodes}
\end{figure}

\begin{Theorem}
Center function is continuous when $p\notin CH(S)$.
\end{Theorem}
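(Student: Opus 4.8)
The plan is to turn the claim into a routine check against the region decomposition already established. By Theorem \ref{theorem_arbitrary_case} the set $\mathbb{R}^2\setminus CH(S)$ is covered by finitely many regions of two sorts: node regions $\varphi^{-1}(O)$ on which $\varphi\equiv O$, and dominant regions $DR(E)$ on which $\varphi(p)$ is the point of $E^\circ$ at distance $\lambda^{*}(p)$ from the parent node, with $\lambda^{*}$ given by \eqref{eq_lambda} in terms of the coefficients $a=\gamma-\rho\cos\theta$, $b=r^2-\rho^2$, $c=\rho r^2\cos\theta-\rho^2\gamma$ of the edge's local system. First I would record that $\varphi$ is continuous on the relative interior of every such region: it is constant on node regions, and on $DR(E)$ the coefficients $a,b,c$ are polynomials in the Cartesian coordinates of $p$ while $b^2-4ac\ge 0$ throughout $DR(E)$ (a genuine critical point $\lambda^{*}$ exists there, cf.\ Tables \ref{table_trend_unbounded}--\ref{table_trend_bounded}), so $p\mapsto\lambda^{*}$ and hence $\varphi|_{DR(E)}$ is continuous, indeed real-analytic where the discriminant is positive. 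Thus the whole content of the theorem is: the one-sided limits of $\varphi$ agree along every arc shared by two of these regions inside $\mathbb{R}^2\setminus CH(S)$.

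Next I would classify the shared boundary arcs. From the single-edge analysis (Figs.\ \ref{pic_division_unbounded} and \ref{pic_division_bounded}) and the inductive construction of the decomposition in Lemma \ref{lemma_start} and Lemma \ref{lemma_deductive}, every boundary arc is one of: (i) an arc $E\!:c=0$ or $E\!:b_1=0$ separating a dominant region $DR(E)$ from the node region of one of the two endpoints of $E$; (ii) an arc bordering $\varphi^{-1}(\infty)=CH(S)-S$; or (iii) a piece of a segment $x_ix_j$ or triangle spanned by hull vertices, i.e.\ a subset of $CH(S)$. Types (ii)--(iii) lie in $CH(S)$, hence meet the \emph{open} set $\mathbb{R}^2\setminus CH(S)$ at most in hull vertices, which are outside the domain of $\varphi$; this is precisely why the statement is restricted to $p\notin CH(S)$. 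The decisive computation is then type (i): on $E\!:c=0$ (where $b<0$) substituting $c=0$ into \eqref{eq_lambda} gives $\lambda^{*}=(b+|b|)/(2|a|)=0$ since $a<0$ on $DR(E)$ and $b<0$, so $\varphi(p)\to O_1$, the parent node, matching the constant value on the adjacent node region; and $E\!:b_1=0$ was defined as the locus $f'(\delta)=0$, i.e.\ $\lambda^{*}=\delta$, so $\varphi(p)\to O_2$, the child node, again matching. Combined with the interior continuity above, this gives continuity of $\varphi$ on $\mathbb{R}^2\setminus CH(S)$.

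A slicker alternative, which I would mention, bypasses the case analysis via Berge's maximum theorem. Adjoin the single point $\infty$ to $FVB(S)$ (equivalently view $FVB(S)\cup\{\infty\}$ as a subset of the sphere $S^2$, by Fact \ref{fact_infty} all unbounded edges ending at that one point), making it a compact metric space, and extend the objective of $M(W_4)$ by $f(\infty,p):=1$; since $\|x-p\|/\max_k\|x-x_k\|\to 1$ uniformly for $p$ in bounded sets as $|x|\to\infty$, the extended $f$ is jointly continuous. The feasible set is compact and independent of $p$, so the $\arg\max$ correspondence is nonempty and upper hemicontinuous; by Theorem \ref{theorem_arbitrary_case} it is single-valued for $p\notin CH(S)$, and a single-valued upper hemicontinuous correspondence is a continuous function, which is exactly $\varphi$.

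The main obstacle is Step (ii) of the elementary route: confirming that the finitely many boundary arcs produced by the recursion of Lemma \ref{lemma_deductive} are exhausted by types (i)--(iii), and in particular that no arc of the form $E\!:b=0$ or $E\!:b_2=0$ — the locus across which $\varphi$ would jump between the two \emph{endpoints} of a single edge — survives in the final decomposition outside $CH(S)$. Concretely one must check that every node region $\varphi^{-1}(O)$ with $O\neq\infty$ is, as it is built in the two lemmas, bordered only by the arcs $Arc(O,\cdot,\cdot)$ facing the three dominant regions incident to $O$ (with any further contact being with $CH(S)$ at hull vertices), and that wherever two dominant regions abut, the shared arc is the locus where both sides' $\lambda^{*}$ reach the \emph{same} common endpoint, so the limits still coincide. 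In the Berge route the only delicate point is the uniform decay $f\to 1$ near $\infty$ needed for joint continuity of the compactified objective; everything else there is standard, and Theorem \ref{theorem_location} guarantees the target of $\varphi$ stays in $FVB(S)\setminus\{\infty\}$ on the stated domain.
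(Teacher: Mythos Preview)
Your elementary route is structurally the paper's own argument: check continuity region by region and then match across the boundary arcs. The one difference in execution is how continuity on a dominant region $DR(E)$ is obtained. The paper argues via the level sets: it observes (from the derivation of $b_1=0$ in Section~\ref{subsection_bounded_edge}) that for each $x^{*}\in E^{\circ}(x_i,x_j)$ the preimage $\varphi^{-1}(x^{*})$ is the arc of the circle through $x_i,x_j,x^{*}$ lying in $DR(E)$, and these arcs sweep $DR(E)$ continuously from $E\!:\!c=0$ to $E\!:\!b_1=0$ as $x^{*}$ runs from $O_1$ to $O_2$. You instead read continuity directly off the closed form $\lambda^{*}(p)$ in \eqref{eq_lambda}; this is equivalent and arguably more explicit, and your boundary computations ($\lambda^{*}\to 0$ on $E\!:\!c=0$, $\lambda^{*}\to\delta$ on $E\!:\!b_1=0$) are exactly the endpoint matching the paper leaves to the word ``similarly.''

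Your Berge/maximum-theorem alternative is genuinely different from anything in the paper and is the cleanest route: it replaces all the case analysis by compactifying $FVB(S)$ at $\infty$, checking joint continuity of the extended objective, and invoking single-valued upper hemicontinuity from Theorem~\ref{theorem_arbitrary_case}. The only point to be careful with there is exactly the one you flag, the uniform limit $f\to 1$ at $\infty$ for $p$ in compacta.

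Finally, the obstacle you isolate---that no $E\!:\!b=0$ or $E\!:\!b_2=0$ arc (across which $\varphi$ would jump between the two endpoints of a single edge) survives in the final decomposition outside $CH(S)$---is real and is not spelled out in the paper's proof either; it hides behind ``similarly.'' It is, however, delivered by the recursion you cite: after Lemma~\ref{lemma_deductive}, every non-root node region $\varphi^{-1}(O)$ is bounded precisely by the three arcs $Arc(O,x_i,x_j)$, $Arc(O,x_i,x_k)$, $Arc(O,x_j,x_k)$, each of which is an $E\!:\!c=0$ or $E\!:\!b_1=0$ boundary of one of the three incident edges, so adjacent regions outside $CH(S)$ always share a common endpoint in the sense you need.
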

\begin{proof}
When $p\notin CH(S)$, $\varphi(p)$ is defined and $\varphi(p)<\infty$, the continuity could be discussed. If $p$ lies in the region corresponding to a node in $DV(S)$, the center function is constant, thus it is continuous. If $p$ lies in the dominant region of an edge $E(x_i,x_j)$ in $DV(S)$, the optimal solution $x^*$ must lie on $E^\circ(x_i,x_j)$. From the presentation of boundary $b_1=0$ in Section \ref{subsection_bounded_edge}, $\varphi^{-1}(x^*)$ should be an arc lying on the circle which passes through $x_i$, $x_j$ and $x^*$. With $x^*$ moving along $E(x_i,x_j)$ from the starting node $O_1$ to the ending node $O_2$, $\varphi^{-1}(x^*)$ changes from boundary $E:c=0$ to $E:b_1=0$ continuously, thus the center function is continuous in $DR(E)$. Similarly, the center function is also continuous on any boundary of two adjacent regions.
\end{proof}

\begin{Theorem}
Assume $CH(S)$ has $m$ distinct vertex, then there are at most $3m-4$ non-overlapping regions in the plane division associated with $FVB(S)$.
\end{Theorem}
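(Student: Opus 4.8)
The plan is to count the regions against the combinatorics of $DV(S)$. By Theorem~\ref{theorem_arbitrary_case} every region of the plane division is either $\varphi^{-1}(O)$ for a node $O$ of $DV(S)$ or $DR(E)=\varphi^{-1}(E^\circ)$ for an edge $E$, and all unbounded edges terminate at the single node $\infty$ (Fact~\ref{fact_infty}), which carries the single region $\varphi^{-1}(\infty)=CH(S)-S$. So the region count is at most (number of edges of $DV(S)$) $+$ (number of finite nodes of $DV(S)$) $+\,1$, and it suffices to bound these two quantities.

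I would first dispose of the case in which $\varepsilon(S)$ is already a vertex of $FVB(S)$ (the acute-/right-triangle root type of Lemma~\ref{lemma_start}): no splitting occurs, $DV(S)$ is $FVB(S)$ rooted at $\varepsilon(S)$, and the classical bounds on the farthest-point Voronoi diagram of $m$ convex-hull vertices give at most $2m-3$ edges and at most $m-2$ finite vertices~\cite{book2008CG}, so the sum is $(2m-3)+(m-2)+1=3m-4$. This total is preserved by the induction of Lemma~\ref{lemma_deductive}: each time an unbounded edge of depth $d$ is replaced by a subtree, exactly one new static point enters $S_d$ (the apex of one newly exposed farthest-point cell), its dominant region is cut into five pieces one of which is absorbed by $\varphi^{-1}(\infty)$ while four are genuinely new, so both $3m$ and the region count increase by $3$; starting from the $|S_1|=3$ base case re-establishes the bound for every $DV(S)$ of this type.

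The delicate case is $\varepsilon(S)\in E^\circ(x_i,x_j)$ (the ``two points on a diameter'' type), where the construction of $DV(S)$ inserts $\varepsilon(S)$ as a node and splits $E(x_i,x_j)$ into $E'$ and $E''$, a priori adding one node and one edge and pushing the naive total to $3m-2$. Here I would argue that the split produces no net surplus. Since $\varepsilon(S)$ lies on segment $x_ix_j$, both halves have $\gamma=0$, and by the single-edge analysis of Section~\ref{section_os_single_edge} the adjoint region $AR(\varepsilon(S))$ of each half lies in the open half-plane bounded by the line through $x_i$ and $x_j$ on that half's own side; these half-planes are opposite, so no weight point makes $\varepsilon(S)$ optimal on both halves, hence $\varphi^{-1}(\varepsilon(S))$ is contained in the line $x_ix_j$ and is not a region of the plane division. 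A second collapse should absorb the remaining surplus: a $\gamma=0$ edge realizes only three of the single-edge regions rather than six (regions $\large{\textcircled{\small{1}}}$ and $\large{\textcircled{\small{6}}}$ of Tables~\ref{table_trend_unbounded} and~\ref{table_trend_bounded} are empty when $\gamma=0$). Equivalently and perhaps more cleanly, one may simply count the partition against the \emph{unsplit} $FVB(S)$, so that the optimum sitting at $\varepsilon(S)$ on $E^\circ(x_i,x_j)$ is swept into the single region $\varphi^{-1}(E^\circ(x_i,x_j))$; then the division has at most (edges of $FVB(S)$) $+$ (finite vertices) $+\,1\le 3m-4$ regions.

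The main obstacle I expect is precisely this bookkeeping in the $\varepsilon(S)\in E^\circ$ case: one must verify rigorously that splitting the root edge and introducing the two $\gamma=0$ degenerate half-edges never pushes the count past $3m-4$ — that the two ``extra'' candidate regions really do collapse, or, in the unsplit-count formulation, that refining the edge through $\varepsilon(S)$ creates no new full-dimensional region. I would organize this by folding it into the induction of Lemma~\ref{lemma_deductive} and checking the $|S_1|=2$ base case directly (where the only full-dimensional regions are the two half-planes $DR(E')$, $DR(E'')$, the sets $\varphi^{-1}(\varepsilon(S))$ and $\varphi^{-1}(\infty)$ being lower-dimensional), so that the ``$3m$ and region count both increase by $3$'' accounting applies uniformly from there on.
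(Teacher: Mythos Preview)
Your proposal is correct and ultimately follows the same route as the paper: count regions as (edges of $FVB(S)$) $+$ (nodes, including $\infty$) $\le (2m-3)+(m-1)=3m-4$, and in the case $\varepsilon(S)\in E^\circ(x_i,x_j)$ observe that the extra node and edge introduced by the split do not create new full-dimensional regions because $\varphi^{-1}(\varepsilon(S))$, $DR(E')$, and $DR(E'')$ merge back into the single region $\varphi^{-1}(E^\circ(x_i,x_j))$. The paper's proof is exactly this direct count-and-collapse; your inductive bookkeeping via Lemma~\ref{lemma_deductive} and the remark about regions $\large{\textcircled{\small{1}}}$, $\large{\textcircled{\small{6}}}$ are unnecessary detours, and your ``unsplit-count'' formulation is precisely what the paper writes.
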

\begin{proof}
There are at most $2m-3$ edges and $m-1$ nodes ($\infty$ is treated as one node) in $FVB(S)$. During the construction of $DV(S)$ from $FVB(S)$, if $\varepsilon(S)$ lies on the interior of one edge, like $E(x_1,x_2)$, there are one more edge and one more node in $DV(S)$ than $FVB(S)$ after the separation of the edge, thus there should be at most $3m-2$ non-overlapping regions in the plane division associated with $DV(S)$. However, $\varphi^{-1}(\varepsilon(S))$ is the boundary between the dominant regions of the two separated edges $\varphi^{-1}(E_1)$ and $\varphi^{-1}(E_2)$ according to the plane division of the first type in Lemma \ref{lemma_start}. When the weight point lies on one of the three regions, the optimal solution must lie on $E^\circ(x_1,x_2)$. Thus the three regions are combined as one in the plane division associated with $FVB(S)$.
\end{proof}

Given the division tree and an arbitrary weight point, we just need to determine which region the weight point lies in to calculate the optimal solution. From Lemma \ref{lemma_deductive}, if the weight point $p$ has been verified to lie in the dominant of an unbounded edge $E_1$ in $DV(S_d)$ with depth $d$, we search the corresponding subtree in $DV(S_{d+1})$ and skip other branches. Thus not all edges in the division tree need to be searched and the time complexity in worst cases is consistent with the depth of the division tree. For balanced tree structure, $D\sim O(\log m)$.

\section{Application}
In this section we present a practical problem which could be solved using the proposed new variation. Assume all distinct points in $S=\{x_1,x_2,...,x_n\}$ $(n\geq 2)$ lie on a rigid object in 2D space and restrict the maximum displacement in 2D rigid motion of points in $S$, we want to know the possible maximum displacement of another point $p$ on the rigid object in 2D rigid motion. The displacement of certain point after 2D rigid motion could be represented by the Target Registration Error (TRE) \cite{Fitzpatrick1998}.
\begin{defn}
The Target Registration Error (TRE) of point $p$ in $\mathbb{R}^2$ is defined as $TRE(p)=Rp+s-p$, where $R$ is the rotation matrix and $s$ is the translation term.
\end{defn}
Usually the rotation matrix is denoted by $R=[\cos\theta, -\sin\theta; \sin\theta, \cos\theta]$, where $\theta$ is the rotation angle, and the translation term is denoted by $s=[s_x;s_y]$. Use $\|\cdot\|$ to denote the magnitude of $TRE$. then the above problem could be represented as,
\begin{equation}
\label{eq_perturbation}
M(P):\quad \max_{R,s} \|TRE(p)\| \qquad  s.t.\quad \|TRE(x_i)\|\leq C, \forall\ x_i\in S
\end{equation}
where $C$ is the maximum displacement of points in $S$. Both the objective function and the constraint function are the $TRE$ magnitude of some point, whose characteristics are revealed by the following two lemmas.
\begin{Lemma}
\label{lemma_contourline}
If $\theta\neq 0$, the contour line of the $TRE$ magnitude in $\mathbb{R}^2$ is a circle. The $TRE$ magnitude of arbitrary point $p$ is,
\begin{equation}
\|TRE(p)\|=2l\sin\frac{|\theta|}{2},
\end{equation}
where $l$ is the $l_2$ distance from $p$ to the circle center $(c_x,c_y)$.
 which is,
\begin{equation}
\label{eq_circle}
c_x = \frac{1}{2}(s_x - s_ycot\frac{\theta}{2}), \quad c_y = \frac{1}{2}(s_xcot\frac{\theta}{2}+s_y).
\end{equation}
\end{Lemma}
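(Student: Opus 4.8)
The plan is to use the elementary geometric fact that a planar rigid motion with nonzero rotation angle is a rotation about a unique fixed point, and that the displacement of any point is then obtained by rotating the vector from that fixed point to the point.

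First I would write the $TRE$ as an affine map of $p$,
\begin{equation*}
TRE(p)=Rp+s-p=(R-I)p+s,
\end{equation*}
and identify the center of the claimed contour circles with the fixed point $c$ of the motion, i.e. the solution of $Rc+s=c$, equivalently $(I-R)c=s$. With $R=[\cos\theta,-\sin\theta;\sin\theta,\cos\theta]$ we have
\begin{equation*}
I-R=\begin{pmatrix}1-\cos\theta & \sin\theta\\ -\sin\theta & 1-\cos\theta\end{pmatrix},\qquad \det(I-R)=2(1-\cos\theta)\neq 0
\end{equation*}
precisely because $\theta\neq 0$, so $c$ exists and is unique. Inverting $I-R$, multiplying by $s$, and simplifying with the half-angle identities $\sin\theta=2\sin\frac{\theta}{2}\cos\frac{\theta}{2}$ and $1-\cos\theta=2\sin^2\frac{\theta}{2}$ (hence $\frac{\sin\theta}{1-\cos\theta}=\cot\frac{\theta}{2}$) recovers exactly the formulas \eqref{eq_circle} for $(c_x,c_y)$.

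Second, I would use $Rc+s=c$ to rewrite $TRE(p)=(R-I)p+s=(R-I)(p-c)$, so that the displacement of $p$ is the image of the vector $v:=p-c$ under $R-I$. Since $R$ is orthogonal, $\|Rv\|=\|v\|$ and $\langle Rv,v\rangle=\|v\|^2\cos\theta$, whence
\begin{equation*}
\|TRE(p)\|^2=\|(R-I)v\|^2=\|Rv\|^2-2\langle Rv,v\rangle+\|v\|^2=2\|v\|^2(1-\cos\theta)=4\|v\|^2\sin^2\tfrac{\theta}{2}.
\end{equation*}
Taking the positive square root and writing $l=\|v\|=\|p-c\|$ gives $\|TRE(p)\|=2l\,\sin\frac{|\theta|}{2}$ (using $|\sin\frac{\theta}{2}|=\sin\frac{|\theta|}{2}$, valid since a rotation angle may be chosen in $(-\pi,\pi]$, so $\frac{|\theta|}{2}\in[0,\frac{\pi}{2}]$). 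Because the right-hand side depends on $p$ only through $l$, the level set $\{p:\|TRE(p)\|=k\}$ is the circle centered at $c$ of radius $k/(2\sin\frac{|\theta|}{2})$, which establishes that the contour line is a circle and completes the proof.

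I do not anticipate a real obstacle: the only piece of genuine content is recognizing that the hypothesis $\theta\neq0$ is exactly what makes $I-R$ invertible, so that the rotation center $c$ is well-defined; everything after that is orthogonality of $R$ plus routine half-angle bookkeeping. The only spots requiring a little care are the sign step $|\sin\frac{\theta}{2}|=\sin\frac{|\theta|}{2}$ and a one-line remark that the excluded case $\theta=0$ corresponds to a pure translation with $TRE$ constant, for which the contour-line statement degenerates, consistent with the hypothesis.
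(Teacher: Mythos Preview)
Your proof is correct and reaches the same conclusions, but the route differs from the paper's. The paper proceeds by brute-force coordinate expansion: it writes out the two components of $TRE(p)$ explicitly, sets $\|TRE(p)\|^2=v^2$, expands the resulting expression as a polynomial in $p_x,p_y$, and then recognizes the equation of a circle, reading off the center from the linear coefficients and the radius from the standard form. Your approach instead identifies the center $c$ \emph{a priori} as the fixed point of the rigid motion (the solution of $(I-R)c=s$), then uses the orthogonality of $R$ to get $\|(R-I)(p-c)\|^2=2\|p-c\|^2(1-\cos\theta)$ in one line, bypassing the coordinate expansion entirely.

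What each buys: the paper's computation is completely self-contained and requires no geometric interpretation, at the cost of a page of bookkeeping. Your argument is shorter and more conceptual --- it makes transparent \emph{why} the contour lines must be concentric circles (the motion is a rotation about $c$) and \emph{why} $\theta\neq 0$ is the right hypothesis (invertibility of $I-R$). The half-angle step and the inversion of $I-R$ are the only places where you still have to compute, and both are routine. Your sign remark about $|\sin\frac{\theta}{2}|=\sin\frac{|\theta|}{2}$ is a nice touch that the paper leaves implicit.
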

\begin{proof}
Denote point $p$ by the Cartesian coordinate $(p_x,p_y)$, then the $TRE$ of point $p$ could be expressed as
\begin{equation*}
TRE(p) = [p_xcos\theta-p_ysin\theta-p_x+s_x;\ p_xsin\theta+p_ycos\theta-p_y+s_y].
\end{equation*}
Set $\|TRE(p)\|^2=v^2$, where $v$ is the magnitude of $TRE$, we have
\begin{equation*}
(p_xcos\theta-p_ysin\theta-p_x+s_x)^2 +(p_xsin\theta+p_ycos\theta-p_y+s_y)^2=v^2.
\end{equation*}
Combine like terms of $p_x$ and $p_y$,
\begin{equation*}
(2-2cos\theta)(p_x^2+p_y^2) + 2p_x(s_x(cos\theta-1)+s_ysin\theta) + 2p_y(-s_xsin\theta+s_y(cos\theta-1)) + s_x^2 + s_y^2 = v^2.
\end{equation*}
The above equation is a circle, and the circle center could be represented using Eq. \eqref{eq_circle}.
Then the standard equation of the circle could be represented,
\begin{equation*}
(p_x-c_x)^2 + (p_y-c_y)^2 = v^2/(4sin^2\frac{\theta}{2}).
\end{equation*}
\end{proof}
\begin{Lemma}
\label{lemma_contourline_2}
If $\theta=0$, $\|TRE(p)\|=\|s\|$ for any point in the plane.
\end{Lemma}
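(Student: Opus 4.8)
The plan is to substitute $\theta=0$ directly into the definition of the Target Registration Error and observe that the rotation degenerates to the identity, so that the $TRE$ becomes a pure translation. First I would note that with $\theta=0$ the rotation matrix is $R=[\cos 0, -\sin 0;\ \sin 0, \cos 0]=[1,0;\ 0,1]$, the $2\times 2$ identity matrix. Then, by the definition $TRE(p)=Rp+s-p$, we immediately get $TRE(p)=p+s-p=s$, which does not depend on $p$. Taking magnitudes yields $\|TRE(p)\|=\|s\|$ for every point $p$ in the plane, which is the claim.

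As a consistency check I would cross-verify this against Lemma \ref{lemma_contourline}. Repeating the expansion in the proof of Lemma \ref{lemma_contourline} but setting $\theta=0$ makes the coefficient $2-2\cos\theta$ vanish, so all the $p_x$ and $p_y$ terms are annihilated and the contour equation collapses to $s_x^2+s_y^2=v^2$, i.e. $v=\|s\|$ regardless of $(p_x,p_y)$. Equivalently, in the formula $\|TRE(p)\|=2l\sin(|\theta|/2)$ the factor $\sin(|\theta|/2)$ tends to $0$, so the ``circle'' of radius $v/(2\sin(\theta/2))$ degenerates; the only finite, well-defined contour left is the one on which every point shares the common value $\|s\|$.

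There is no genuine obstacle here: the statement is an immediate degenerate case of the rigid motion, and the only point worth making explicit is that $\theta=0$ is precisely the situation excluded from Lemma \ref{lemma_contourline} (where an honest circle appears). Recording Lemma \ref{lemma_contourline_2} separately is needed only so that the case analysis feeding into problem $M(P)$ is exhaustive over all rotation angles.
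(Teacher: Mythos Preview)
Your argument is correct: substituting $\theta=0$ gives $R=I$, so $TRE(p)=Rp+s-p=s$ for every $p$, and hence $\|TRE(p)\|=\|s\|$. The paper states this lemma without proof (it is treated as immediate), so your direct computation is exactly the intended justification; the additional consistency check against Lemma~\ref{lemma_contourline} is sound but unnecessary.
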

According to the above two lemmas, we have the following conclusion.
\begin{Theorem}
When $0<C\leq 2r(S)$, problem $M(W_1)$ is equivalent with problem $M(P)$.
\end{Theorem}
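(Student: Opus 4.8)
The plan is to recognize problem $M(P)$ as, essentially, problem $M(W_1)$ in disguise by a change of variables: I would reparametrize each admissible rigid motion by its fixed point $c$ and its rotation angle $\theta$, carry out the maximization over $\theta$ for each fixed $c$, and observe that the resulting reduced objective — a function of $c$ alone — is exactly $C$ times the objective of $M(W_1)$. The hypothesis $C\le 2r(S)$ is precisely what makes the inner maximization over $\theta$ behave cleanly.

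First I would dispose of the degenerate case $\theta=0$. By Lemma \ref{lemma_contourline_2} a pure translation has $\|TRE(q)\|=\|s\|$ for every point $q$, so the constraints force $\|s\|\le C$ and the largest objective value achievable with $\theta=0$ is exactly $C$. Since the objective of $M(W_1)$ tends to $1$ as $x\to\infty$ and its supremum is attained (Theorem \ref{theorem_arbitrary_case}), the optimal value of $M(W_1)$ is at least $1$, so a pure translation can never exceed the rotational motions treated next and the case $\theta=0$ may be set aside (it is the limit in which the fixed point escapes to infinity).

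Next, for $\theta\ne 0$, I would invoke Lemma \ref{lemma_contourline}: the motion is a rotation through $\theta$ about a unique fixed point $c=(c_x,c_y)$, and every point $q$ satisfies $\|TRE(q)\|=2\|q-c\|\sin\frac{|\theta|}{2}$, a quantity that depends on the angle only through $t=\sin\frac{|\theta|}{2}\in(0,1]$. So maximizing over the feasible motions with $\theta\ne 0$ amounts to maximizing over $c\in\mathbb{R}^2$ and, for each $c$, over $t\in(0,1]$. For a fixed $c$ the $n$ constraints $\|TRE(x_i)\|\le C$ collapse into the single inequality $2t\max_i\|c-x_i\|\le C$, so the largest feasible value of $t$ is $\frac{C}{2\max_i\|c-x_i\|}$; because $\max_i\|c-x_i\|\ge\min_x\max_i\|x-x_i\|=r(S)\ge C/2$ for every $c$, this value lies in $(0,1]$ and is attained by a genuine rotation. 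Since the objective $\|TRE(p)\|=2t\|c-p\|$ is increasing in $t$, substituting the optimal $t$ gives $\|TRE(p)\|=C\cdot\frac{\|c-p\|}{\max_i\|c-x_i\|}$ for each $c$, and maximizing over $c$ then yields $\max_{R,s}\|TRE(p)\|=C\cdot\max_{c}\frac{\|c-p\|}{\max_i\|c-x_i\|}$, which is $C$ times the optimal value of $M(W_1)$, with the maximizing $c$ equal to the optimal solution $x^*$ of $M(W_1)$ (unique when $p$ is not a vertex of $CH(S)$, by Theorems \ref{theorem_location} and \ref{theorem_arbitrary_case}). This is the claimed equivalence.

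The step I expect to be the main obstacle is the feasibility bookkeeping around $\sin\frac{|\theta|}{2}\le 1$: one must confirm that $C\le 2r(S)$ keeps the tight choice $t=\frac{C}{2\max_i\|c-x_i\|}$ admissible for every candidate fixed point $c$ at once — if $C>2r(S)$ it fails for $c$ near $\varepsilon(S)$ and the reduction collapses — and one must check that the configurations left out of the clean reduction (pure translations, or any $c$ at which the angle would have to be pulled strictly below the tight value) can never yield a larger displacement of $p$. Both of these rest on the single inequality $r(S)\ge C/2$ together with the optimal value of $M(W_1)$ being at least $1$.
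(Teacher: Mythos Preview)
Your proof is correct and follows essentially the same route as the paper: split off the pure-translation case $\theta=0$, then for $\theta\ne 0$ reparametrize the motion by its fixed point and rotation angle via Lemma~\ref{lemma_contourline}, optimize first over the angle using the constraint $\sin\frac{|\theta|}{2}\le \frac{C}{2\max_i\|c-x_i\|}$ (with $C\le 2r(S)$ ensuring this bound is at most $1$), and then recognize the reduced problem as $C$ times $M(W_1)$. Your treatment of the $\theta=0$ case and the feasibility check on $t$ are slightly more explicit than the paper's, but the argument is the same.
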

\begin{proof}
If $\theta\neq0$, problem $M(P)$ could be represented as below according to Lemma \ref{lemma_contourline},
\begin{equation*}
\max_{x,\theta} 2\|x-p\|\sin\frac{|\theta|}{2}  \qquad s.t. \quad \forall i,\ 2\|x-x_i\|\sin\frac{|\theta|}{2}\leq C,
\end{equation*}
where $x$ is the undetermined circle center of the contour line. From the constraint we have,
\begin{equation*}
\sin\frac{|\theta|}{2}\leq \min_{i}\frac{C}{2\|x-x_i\|} = \frac{C}{2\max_{i}\|x-x_i\|}.
\end{equation*}
When $0<C\leq 2r(S)$, $\forall x\in\mathbb{R}^2$, there exists a $\theta$ such that the equality of the constraint holds. Thus the objective function could be represented as,
\begin{equation}
\label{eq_equivalent}
\max_{x} \frac{C\|x-p\|}{\max_i \|x-x_i\|}.
\end{equation}
which is only different from problem $M(W_1)$ with a constant $C$.

If $\theta=0$, problem $M(P)$ could be represented as below according to Lemma \ref{lemma_contourline_2},
\begin{equation*}
\max_{s} \|s\| \qquad s.t. \quad \|s\|\leq C.
\end{equation*}
The optimal value is $C$, which is identical with the objective function value of Eq. \eqref{eq_equivalent} when $x=\infty$.
\end{proof}
Therefore problem $M(P)$ could be analytically solved using the new variation of the minimum enclosing circle problem.
\section{Conclusion}
In this article we studied a new variation of the minimum enclosing circle problem, in which the distance from the undetermined circle center to the weight point $p$ is used as the dynamic weight. The time complexity of each step in solving the problem is analysed below. Given a set $S$ of $n$ static points, $CH(S)$ could be calculated in $O(n\log n)$ time. Assume there are $m$ distinct vertex in $CH(S)$, then $FVB(S)$ and $\varepsilon(S)$ could be calculated in $O(m)$ time. $FVB(S)$ and $\varepsilon(S)$ form the division tree. Given the division tree and an arbitrary weight point, the time complexity for worst cases is consistence with the depth of the division tree. For balanced tree structure, $D\sim O(\log m)$. The future work may extend to higher space $\mathbb{R}^d (d>2)$ or study the cases when both static weight $w_i$ and dynamic weight $w(x)=1/\|x-p\|$ are used at the same time.
\section*{Acknowledgement}
This work is partly supported by the National Natural Science Foundation of China with Grant No. 61172125 and No. 61132007.
\section*{Reference}
\bibliographystyle{elsarticle-num}

\end{document}